\newtheorem{theorem}{}[section]
\newtheorem{lemma}[theorem]{}
\newtheoremstyle{styleclaim}{}{}{\itshape}{}{}{}{ }{(\thmnumber{#2})}
\theoremstyle{styleclaim}
\newcommand{\cref}[1]{(\ref{#1})}
\begin{document}

\title{Three-coloring graphs with no induced seven-vertex path II : using a triangle}
\date{\today}
\author{Maria Chudnovsky\thanks{Princeton University, Princeton, NJ 08544, USA. E-mail: mchudnov@math.princeton.edu. Partially supported by NSF grants IIS-1117631, DMS-1001091 and DMS-1265803.}, Peter Maceli\thanks{Columbia University, New York, NY 10027, USA. E-mail: plm2109@columbia.edu.}, and Mingxian Zhong\thanks{Columbia University, New York, NY 10027, USA. E-mail: mz2325@columbia.edu.}}

\maketitle

\begin{abstract}

In this paper we give a polynomial time algorithm which determines if a given graph containing a triangle and no induced seven-vertex path is $3$-colorable, and gives an explicit coloring if one exists. This is the second paper in a series
of two. The first one, \cite{PART1} is also submitted to this journal.
In \cite{Tfree,PART1}, a polynomial time algorithm is given for three-coloring triangle-free graphs with no induced seven-vertex path. Combined, this shows that three-coloring a graph with no induced seven-vertex path can be done in polynomial time, thus answering a question of \cite{problem}.

\end{abstract}

\section{Introduction}
	
We start with some definitions. All graphs in this paper are finite and simple. Let $G$ be a graph and $X$ be a subset of $V(G)$. We denote by $G[X]$ \textit{the subgraph of $G$ induced by $X$}, that is, the subgraph of $G$ with vertex set $X$ such that two vertices are adjacent in $G[X]$ if and only if they are adjacent in $G$. We denote by 
$G\setminus X$ the graph $G[V(G) \setminus X]$. If $X=\{v\}$ for some $v \in V(G)$, we write $G \setminus v$ instead of $G \setminus \{v\}$.
Let $H$ be a graph. If $G$ has no induced subgraph isomorphic to $H$, then we say that $G$ is \textit{$H$-free}. For a family $\mathcal{F}$ of graphs, we say that $G$ is \textit{$\mathcal{F}$-free} if $G$ is $F$-free for every $F\in \mathcal{F}$. If $G$ is not $H$-free, then \textit{$G$ contains $H$}. If $G[X]$ is isomorphic to $H$, then we say that \textit{$X$ is an $H$ in $G$}. 

For $n\geq 0$, we denote by $P_{n+1}$ the path with $n+1$ vertices, that is, the graph with distinct vertices $\{p_0,p_1,...,p_n\}$ such that $p_i$ is adjacent to $p_j$ if and only if $|i-j|=1$. We call the set
$\{p_1, \ldots, p_{n-1}\}$ the {\em interior} of $P$. For $n\geq 3$, we denote by $C_n$ the cycle of length $n$, that is, the graph with distinct vertices $\{c_1,...,c_n\}$ such that $c_i$ is adjacent to $c_j$ if and only if $|i-j|=1$ or $n-1$. When explicitly describing a path or a cycle, we always list the vertices in order. Let $G$ be a graph. When $G[\{p_0,p_1,...,p_n\}]$ is the path $P_{n+1}$, we say that \textit{$p_0-p_1-...-p_n$ is a $P_{n+1}$ in $G$} or just a \textit{ path}, when there is no danger of confusion. Similarly, when $G[\{c_1,c_2,...,c_n\}]$ is the cycle $C_n$, we say that \textit{$c_1-c_2-...-c_n-c_1$ is a $C_n$ in $G$}. We also refer to a cycle of length three as a \textit{triangle}. A {\em clique} in a graph is a set of pairwise
adjacent vertices. A {\em stable set} is a set of vertices no two of which are
adjacent.

A \textit{$k$-coloring} of a graph $G$ is a mapping $c:V(G)\rightarrow \{1,...,k\}$ such that if $x,y\in V(G)$ are adjacent, then $c(x)\ne c(y)$. For
$X \subseteq V(G)$, we define by $c(X)=\bigcup_{x \in X}\{c(x)\}$.
If a $k$-coloring exists for a graph $G$, we say that  $G$ is \textit{$k$-colorable}. The COLORING problem is determining the smallest integer $k$ such that a given graph is $k$-colorable, and it was one of the initial problems R.M.Karp \cite{KARP} showed to be NP-complete. For fixed $k\geq 1$, the $k$-COLORING problem is deciding whether a given graph is $k$-colorable. Since Stockmeyer \cite{STOCK} showed that for any $k\geq 3$ the $k$-COLORING problem is NP-complete, there has been much interest in deciding for which classes of graphs coloring problems can be solved in polynomial time. In this paper, the general approach that we consider is to fix a graph $H$ and consider the $k$-COLORING problem restricted to the class of $H$-free graphs. 

We call a graph \textit{acyclic} if it is $C_n$-free for all $n\geq 3$. The \textit{girth} of a graph is the length of its shortest cycle, or infinity if the graph is acyclic. Kami\'nski and Lozin \cite{CYCLES} proved:

\begin{lemma}\label{HFREE}

For any fixed $k,g\geq 3$, the $k$-COLORING problem is NP-complete for the class of graphs with girth at least $g$.

\end{lemma}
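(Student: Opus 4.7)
The plan is to reduce from the general $k$-COLORING problem, which is NP-complete by Stockmeyer's theorem. Given an arbitrary graph $G$, I want to construct, in polynomial time, a graph $G'$ of girth at least $g$ such that $G'$ is $k$-colorable if and only if $G$ is. Since $k$ and $g$ are fixed, the construction can use gadgets of constant size without affecting polynomial-time complexity.

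The construction is by edge replacement. First, I would fix a graph $J = J(k,g)$ with two distinguished vertices $a, b$ satisfying
\begin{enumerate}
\item[(i)] $J$ has girth at least $g$;
\item[(ii)] (\emph{forcing}) in every $k$-coloring of $J$ the vertices $a$ and $b$ receive distinct colors;
\item[(iii)] (\emph{flexibility}) every assignment of distinct colors from $\{1,\dots,k\}$ to $\{a,b\}$ extends to a proper $k$-coloring of $J$;
\item[(iv)] every path in $J$ between $a$ and $b$ has length at least $g$.
\end{enumerate}
Assuming such a $J$, form $G'$ by replacing each edge $uv$ of $G$ by a private copy of $J$, identifying $u$ with $a$ and $v$ with $b$ in that copy. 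Property (ii) ensures that a $k$-coloring of $G'$ restricts to a $k$-coloring of $G$, and (iii) ensures that any $k$-coloring of $G$ extends to a $k$-coloring of $G'$ by coloring each gadget independently. For girth: any cycle of $G'$ either lies in a single gadget (length at least $g$ by (i)) or traverses at least two gadgets, in which case its intersection with each traversed gadget contributes an $a$-$b$ subpath of length at least $g$ by (iv), so the cycle has length at least $g$.

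The main obstacle is constructing the gadget $J$. Its existence is essentially equivalent to Erd\H{o}s's classical theorem that there are graphs of arbitrarily large girth and arbitrarily large chromatic number: start from such a graph with girth at least $g$ and chromatic number at least $k+1$, pass to a $(k+1)$-critical subgraph, and then perform standard surgery (delete an edge $ab$, keep the rest) to obtain a graph with the forcing and flexibility properties in which $a$ and $b$ have inner distance at least $g$. Since $k$ and $g$ are constants, $J$ has constant size and can be precomputed; thus the reduction $G \mapsto G'$ runs in polynomial time, completing the proof. The only subtlety worth care is verifying (iv), which may require choosing $J$ as a sufficiently deep subdivision-like modification of the Erd\H{o}s graph so that $a$ and $b$ are forced to be far apart inside the gadget.
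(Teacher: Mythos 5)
The paper itself does not prove this statement --- it is quoted from Kami\'nski and Lozin --- so I am judging your argument on its own merits. Your overall plan is the right one and is essentially the standard argument: reduce from ordinary $k$-COLORING (NP-complete for $k\geq 3$) by replacing every edge with a private copy of a constant-size, high-girth gadget whose two terminals are forced to receive distinct colors, with the gadget's existence ultimately resting on Erd\H{o}s's graphs of large girth and large chromatic number. Your girth analysis of $G'$ is also fine; in fact your worry about property (iv) is misplaced, since a terminal-to-terminal distance of $g-1$ (which comes for free, see below) already suffices: every cycle of $G'$ not contained in a single gadget must fully traverse a terminal-to-terminal path in at least two gadgets.

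The genuine gap is in your construction of $J$. Deleting an edge $ab$ from a $(k+1)$-critical graph $H$ of girth at least $g$ produces a gadget with exactly the \emph{opposite} forcing property: $H-ab$ is $k$-colorable by criticality, and if some $k$-coloring of $H-ab$ gave $a$ and $b$ distinct colors it would already be a proper $k$-coloring of $H$ itself, contradicting $\chi(H)=k+1$. Hence every $k$-coloring of $H-ab$ assigns $a$ and $b$ the \emph{same} color (and the two terminals are automatically at distance at least $g-1$, since any short $a$--$b$ path plus the edge $ab$ would be a short cycle in $H$). Plugging this equality gadget into your edge replacement makes $G'$ $k$-colorable for every input $G$ (color each component of $G$ with one color and extend), so the reduction collapses. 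The repair is routine but must be made: either keep $H-ab$ as an \emph{equality} gadget and simulate each edge $uv$ of $G$ by a copy of it on terminals $u,w$ together with a genuine edge $wv$ (any cycle through $wv$ must cross the whole gadget and is therefore long), or chain two equality gadgets $D(a,c)$ and $D(c',b)$ joined by the single edge $cc'$ to obtain a true inequality gadget satisfying your (i)--(iv). As written, property (ii) is not established and the proof does not go through.
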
 

\noindent As a consequence of \ref{HFREE}, it follows that if the graph $H$ contains a cycle, then for any fixed $k\geq 3$, the $k$-COLORING problem is NP-complete for the class of $H$-free graphs. The \textit{claw} is the graph with vertex set $\{a_0, a_1, a_2, a_3\}$ and edge set $\{a_0a_1, a_0a_2, a_0a_3\}$. A theorem of Holyer \cite{LINE} together with an extension due to Leven and Galil \cite{LINE1} imply the following:

\begin{lemma}\label{CLAW}

If a graph $H$ contains the claw, then for every fixed $k\geq 3$, the $k$-COLORING problem is NP-complete for the class of $H$-free graphs. 

\end{lemma}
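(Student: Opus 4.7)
The plan is to reduce $k$-COLORING on $H$-free graphs from $k$-EDGE-COLORING on general graphs, using the line-graph construction to pass from edges to vertices, and then leveraging the fact that line graphs contain no induced claw.

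First I would recall the cited results. Holyer's theorem \cite{LINE}, as extended by Leven and Galil \cite{LINE1}, gives that for every fixed $k \geq 3$ the $k$-EDGE-COLORING problem (decide whether the edges of a given graph can be properly $k$-colored) is NP-complete. Next I would invoke the standard fact that a proper edge $k$-coloring of a graph $G$ is exactly a proper vertex $k$-coloring of its line graph $L(G)$, and that $L(G)$ can be constructed from $G$ in polynomial time. Combining these two observations, $k$-COLORING restricted to the class of line graphs is NP-complete.

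The next step is the well-known structural fact that every line graph is claw-free: if three edges of $G$ shared a common endpoint and were pairwise non-adjacent as edges, their images in $L(G)$ would form a stable triple all adjacent to a fourth vertex, contradicting the way adjacency in $L(G)$ is defined. Hence $k$-COLORING is NP-complete on the class of claw-free graphs.

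Finally, I would argue that if $H$ contains the claw (as an induced subgraph, in the sense fixed earlier in the paper), then every claw-free graph is automatically $H$-free, because an induced copy of $H$ in a graph $G$ would yield, through the induced claw inside $H$, an induced claw in $G$. Thus the class of claw-free graphs is contained in the class of $H$-free graphs, and the NP-completeness of $k$-COLORING on claw-free graphs immediately transfers to the class of $H$-free graphs, which is what we wanted. Since the underlying reductions are all polynomial, this gives NP-completeness in the stated sense. There is no real obstacle here beyond correctly citing the Holyer/Leven--Galil results and making the line-graph-to-claw-free passage explicit; the argument is essentially a containment of graph classes once those ingredients are in place.
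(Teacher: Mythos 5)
Your proposal is correct and is exactly the argument the paper intends: the paper states this lemma without proof as an immediate consequence of Holyer and Leven--Galil, and the standard chain (edge $k$-coloring is NP-complete, edge colorings of $G$ are vertex colorings of the line graph $L(G)$, line graphs are claw-free, and claw-free graphs are $H$-free whenever $H$ contains an induced claw) is precisely what is meant. No gaps.
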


\noindent Hence, the remaining problem of interest is deciding the $k$-COLORING problem for the class of $H$-free graphs where $H$ is a fixed acyclic claw-free graph. It is easily observed that every connected component of an acyclic claw-free graph is a path. And so, we focus on the $k$-COLORING problem for the class of $H$-free graphs where $H$ is a connected  acyclic claw-free graph, that is, simply a path. Ho\`{a}ng, Kami\'nski, Lozin, Sawada, and Shu \cite{P5} proved the following:

\begin{lemma}

For every $k$, the $k$-COLORING problem can be solved in polynomial time for the class of $P_5$-free graphs. 

\end{lemma}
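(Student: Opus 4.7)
The plan is to proceed by induction on $k$, reducing $k$-COLORING on a $P_5$-free graph to $(k-1)$-COLORING on a $P_5$-free induced subgraph by branching over maximal stable sets. The base case $k=1$ is trivial, since $G$ is $1$-colorable precisely when $G$ has no edge. For the inductive step, I would first observe the following simple ``extension'' fact: if $c:V(G)\to\{1,\ldots,k\}$ is any $k$-coloring and $v$ is a vertex with no neighbor colored $1$, then recoloring $v$ by $1$ yields another valid $k$-coloring. Iterating this shows that $G$ is $k$-colorable if and only if there is a maximal stable set $S\subseteq V(G)$ such that $G\setminus S$ is $(k-1)$-colorable. Since induced subgraphs of $P_5$-free graphs are $P_5$-free, the induction hypothesis applies to $G\setminus S$.

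The algorithm is then: enumerate all maximal stable sets $S_1,\ldots,S_m$ of $G$, and for each, recursively test whether $G\setminus S_i$ is $(k-1)$-colorable; return YES (with an explicit coloring, obtained by prepending $S_i$ as color class $1$ to the recursive answer) as soon as one succeeds.

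The main obstacle is controlling $m$: the naive enumeration of maximal stable sets is exponential in general. Here I would invoke the known result (Balas and Yu) that a $P_5$-free graph on $n$ vertices has only polynomially many maximal stable sets, and that they can be listed in polynomial time. Granting this, the recursion tree has depth $k$ and polynomial branching at each level, so for any fixed $k$ the total running time is polynomial in $|V(G)|$, and the procedure returns an explicit $k$-coloring when one exists.

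If one prefers to avoid citing the Balas--Yu bound as a black box, an alternative route is to combine modular decomposition with the Bacs\'o--Tuza theorem (every connected $P_5$-free graph has a dominating clique or a dominating $P_3$): after handling modules and disconnected graphs recursively, one branches on the coloring of a dominating set of constant-bounded complexity, which again reduces the problem to $(k-1)$-COLORING on a strictly smaller $P_5$-free graph. Either route yields the polynomial bound claimed in the lemma.
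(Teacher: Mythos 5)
First, a point of orientation: the paper does not prove this statement at all --- it is quoted as a known result of Ho\`ang, Kami\'nski, Lozin, Sawada and Shu \cite{P5}, so there is no internal proof to compare against. Judged on its own terms, your main route has a genuine gap: the combinatorial claim that a $P_5$-free graph has only polynomially many maximal stable sets is false. Take $n/3$ pairwise disjoint triangles and join a single universal vertex to all of them; the result is a connected cograph (hence $P_4$-free, hence $P_5$-free), yet it has $3^{n/3}$ maximal stable sets. The result you are thinking of (Farber; see also Balas--Yu) applies to $2K_2$-free graphs, which form a strictly smaller class than $P_5$-free graphs ($P_5$ contains an induced $2K_2$, so $2K_2$-free implies $P_5$-free but not conversely). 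Everything downstream of the enumeration step therefore collapses: the reduction ``$G$ is $k$-colorable iff $G\setminus S$ is $(k-1)$-colorable for some maximal stable set $S$'' is correct, but you cannot afford to try all such $S$.

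Your fallback sketch via the Bacs\'o--Tuza theorem is much closer to the actual argument in \cite{P5}, but as written it also does not quite close. After guessing the coloring of a dominating clique or dominating $P_3$ (at most $k^k$ or $k^3$ guesses), each remaining vertex loses at least one color from its list, but different vertices lose \emph{different} colors. So the residual problem is a list-$(k-1)$-coloring problem (each list of size at most $k-1$, drawn from $\{1,\dots,k\}$), not a plain $(k-1)$-COLORING instance, and the induction has to be set up for the list version from the start. This is exactly how Ho\`ang et al.\ proceed: they prove by induction on $k$ that list-$k$-coloring with lists from a bounded universe is polynomial on $P_5$-free graphs, using the dominating clique/$P_3$ structure to branch. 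If you reformulate your induction hypothesis as a statement about list colorings and drop the maximal-stable-set enumeration entirely, the argument can be repaired along those lines.
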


\noindent Additionally, Randerath and Schiermeyer \cite{P6} showed that:

\begin{lemma}

The $3$-COLORING problem can be solved in polynomial time for the class of 
$P_6$-free graphs. 

\end{lemma}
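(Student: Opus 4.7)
The strategy I would adopt is the standard reduction of $3$-COLORING in a structured graph class to the list-coloring framework. I would work in the more general setting of LIST-$3$-COLORING, where each vertex $v$ has a list $L(v) \subseteq \{1,2,3\}$ of allowed colors and the goal is a proper coloring $c$ with $c(v) \in L(v)$. The key folklore reduction is: if $|L(v)| \le 2$ for every $v$, then LIST-$3$-COLORING reduces to 2-SAT and hence is solvable in polynomial time. Indeed, for each $v$ with $L(v)=\{a,b\}$ introduce a boolean variable $x_v$ encoding the color choice, and each edge contributes at most two $2$-clauses forbidding a monochromatic assignment.

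The algorithm then proceeds as follows. Reduce to the case that $G$ is connected by handling components separately, and if $G$ contains $K_4$ output ``not $3$-colorable''. Otherwise, enumerate a ``small'' candidate dominating subgraph $H$ of $G$ of size at most a fixed constant $c$; there are $O(n^c)$ such candidates. For each $H$ and each of the $O(1)$ proper $3$-colorings $c_0$ of $G[V(H)]$, set $L(v)=\{1,2,3\}\setminus c_0(N(v)\cap V(H))$ for every $v\notin V(H)$. If $H$ dominates $V(G)\setminus V(H)$, every such list has size at most $2$, and the residual LIST-$3$-COLORING instance is solved via the $2$-SAT reduction. Output ``$3$-colorable'' iff some guess succeeds; correctness follows because the true restriction of any proper $3$-coloring to $V(H)$ appears among the enumerated guesses.

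The main obstacle is the structural step: proving that every connected $P_6$-free graph admits a dominating induced subgraph of bounded size. The natural route is to take a longest induced path $P\colon v_0\text{-}v_1\text{-}\cdots\text{-}v_k$ in $G$. Since $G$ is $P_6$-free, $k\le 4$. I would then argue that $V(P)$ dominates $V(G)$: if some $u$ were non-dominated, connectedness would give a shortest path from $u$ to $V(P)$ of length $\ge 2$; concatenating a suitable prefix of this path with the appropriate prefix or suffix of $P$, and using maximality of $P$ to rule out chords, produces an induced $P_6$, a contradiction. The delicate part of the argument is the case analysis on which $v_i$ is reached first by the shortest path from $u$ and on the attachments of the interior vertices of that path to $V(P)$. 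Pathological situations where no long induced path exists (for instance, when $\mathrm{diam}(G)$ is small) are handled directly by enumerating dominating cliques or dominating stars, whose number is polynomial; combining these cases produces the required bounded-size dominating structure and completes the proof.
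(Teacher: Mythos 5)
Your overall framework (guess a bounded dominating structure, propagate to lists of size at most two, finish with $2$-SAT) is the right template for $P_\ell$-free graphs with $\ell\leq 5$, but the structural lemma your proof hinges on --- that every connected $P_6$-free graph has a dominating induced subgraph of \emph{bounded} size --- is false, and this is precisely the obstruction that the present paper calls out in the introduction (``this approach needs to be modified when considering the class of $P_\ell$-free graphs when $\ell\geq 6$, since a dominating set of bounded size may not exist''). A concrete counterexample: take $K_{n,n}$ with parts $A=\{a_1,\dots,a_n\}$ and $B=\{b_1,\dots,b_n\}$ and attach a pendant vertex $p_i$ to each $a_i$ and $q_i$ to each $b_i$. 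This graph is connected, bipartite (hence $3$-colorable and $K_4$-free), and $P_6$-free: any induced path meets $A\cup B$ in at most three vertices (a fourth would create a chord across the complete bipartite part), so adding pendants at both ends caps induced paths at five vertices. Yet its domination number is at least $n$, since the closed neighborhoods $N[p_i]=\{p_i,a_i\}$ are pairwise disjoint. Your ``longest induced path dominates'' argument breaks down on exactly this example: a longest induced path is something like $p_1\hbox{-}a_1\hbox{-}b_1\hbox{-}a_2\hbox{-}p_2$, the undominated vertex $p_3$ reaches it only via $p_3\hbox{-}a_3\hbox{-}b_1$, which lands in the \emph{middle} of the path, and every way of splicing yields only an induced $P_5$.

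The genuine content of the Randerath--Schiermeyer result lies in replacing the bounded dominating set by a dominating structure that may be large but forces only boundedly many color patterns --- in their setting a dominating (induced) complete bipartite subgraph or a dominating induced $C_6$, followed by substantial case analysis; in the present paper's $P_7$ setting, the analogous role is played by the tripod $(A_1,A_2,A_3)$, whose parts are monochromatic by \ref{pod} even though they are unbounded. Without an idea of this kind your argument cannot be repaired by tightening the case analysis on the path $P$: the enumeration of ``dominating cliques or dominating stars'' in your last sentence does not cover graphs like the one above, where no star, clique, or constant-size set dominates.
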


\noindent In \cite{P6} and \cite{problem} the question of the complexity of 
$3$-coloring  $P_7$-free graphs was posed.

On the other hand, Huang \cite{4P7} recently showed that:

\begin{lemma}

The following problems are NP-complete:

\begin{enumerate}

\item The $5$-COLORING problem is NP-complete for the class of $P_6$-free graphs. 

\item The $4$-COLORING problem is NP-complete for the class of $P_7$-free graphs. 

\end{enumerate}

\end{lemma}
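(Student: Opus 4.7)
The plan is to establish both claims by a polynomial reduction from a known NP-hard problem, membership in NP being immediate since any candidate $k$-coloring can be verified in linear time. I would reduce from a convenient variant of satisfiability, say $3$-SAT or NOT-ALL-EQUAL $3$-SAT, turning an instance $\varphi$ with variables $x_1,\dots,x_n$ and clauses $C_1,\dots,C_m$ into a graph $G_\varphi$ that is $k$-colorable if and only if $\varphi$ is satisfiable. The parameters are $(k,t)=(5,6)$ for the first statement and $(k,t)=(4,7)$ for the second.

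The construction rests on two families of gadgets. A \emph{variable gadget} $V_i$ encodes $x_i$ via a pair of literal vertices $u_i, \bar u_i$ attached to a common palette of forced colors, so that in every proper $k$-coloring the pair $(c(u_i), c(\bar u_i))$ takes one of two symmetric values that play the role of the two truth values. A \emph{clause gadget} $W_j$, attached to the three literal vertices of $C_j$, is designed to admit a proper $k$-coloring precisely when at least one of those literals receives its ``true'' color. Both gadgets can be built out of small cliques together with a few auxiliary vertices, which allows one to control their internal structure so that they are themselves $P_t$-free and admit exactly the intended colorings.

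The delicate part, and the main obstacle, is to glue the gadgets together so that $G_\varphi$ still contains no induced $P_t$. Long induced paths arise naturally by meandering through several gadgets along low-degree ``wires'', so I would eliminate such wiring and instead route all interactions through a bounded number of \emph{hub} vertices that are (nearly) universal in $G_\varphi$. These hubs simultaneously pin the diameter of $G_\varphi$ and supply a chord for any path attempting to leave one gadget and enter another; the verification that $G_\varphi$ is $P_t$-free then becomes a case analysis on how an induced path can intersect the variable gadgets, the clause gadgets, and the hubs, with the parameter $t$ chosen so tightly that in each case a shortcutting edge is forced. The trickier of the two cases is the $P_7$-free one, since a length bound of $7$ allows a path to pass through two gadgets and only a single hub; the construction must therefore be designed so that every pair of gadgets meets every hub, leaving no room for an induced $P_7$.

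Finally, I would verify correctness of the reduction: a satisfying assignment produces a proper $k$-coloring by coloring each literal according to its truth value, extending to each clause gadget via a satisfied literal, and completing on hubs and palettes; conversely, the gadget properties force any proper $k$-coloring of $G_\varphi$ to restrict to a consistent truth assignment satisfying every clause. Polynomial running time of the reduction is immediate, since each gadget has constant size and there are only $O(n+m)$ of them.
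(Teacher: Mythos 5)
This lemma is not proved in the paper at all: it is quoted as an external result of Huang \cite{4P7}, so there is no internal proof to compare your argument against. Judged on its own, your proposal is a plan rather than a proof. The entire content of the theorem is concentrated in the step you defer: exhibiting concrete variable and clause gadgets that (i) admit exactly the intended $k$-colorings, (ii) can be wired to one another so that a satisfying assignment corresponds precisely to a proper $k$-coloring, and (iii) yield a global graph with no induced $P_6$ (resp.\ $P_7$). You assert that such gadgets ``can be built out of small cliques together with a few auxiliary vertices,'' but no candidate gadget is written down, no list of admissible colorings is verified, and the case analysis showing $P_t$-freeness is only described, not carried out. Since the whole difficulty of Huang's theorem is that these three requirements pull against each other -- and since the analogous statements with smaller $k$ or $t$ are \emph{false} (e.g.\ $3$-coloring $P_7$-free graphs is polynomial by this very paper, and $4$-coloring $P_6$-free graphs was long open) -- a correct proof must exploit the exact values $(k,t)=(5,6)$ and $(4,7)$ in a way your outline never pins down.

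There is also a concrete obstruction to the specific mechanism you propose. Your plan routes all inter-gadget interaction through ``(nearly) universal hub'' vertices whose job is to chord any long path. But a hub $h$ adjacent to essentially all of $G_\varphi$ forces one color class to be (almost) $\{h\}$, so $k$-coloring $G_\varphi$ reduces to $(k-1)$-coloring $G_\varphi\setminus N[h]$-complements of small exceptional sets; in the $(4,7)$ case this would collapse the problem toward $3$-coloring a $P_7$-free graph, which is solvable in polynomial time by \ref{4C}. Conversely, if the hubs are far from universal, they no longer supply the chord needed to kill an induced $P_7$ that spends three vertices in each of two gadgets and one vertex on a hub. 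So the hub device, as stated, is caught between destroying the hardness and failing to guarantee $P_t$-freeness; resolving that tension is exactly where a genuine construction (such as Huang's) has to do real work, and your proposal does not supply it.
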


For our purposes, it is convenient to consider the following more general coloring problem. A \textit{palette $L$} of a graph $G$ is a mapping which assigns each vertex $v\in V(G)$ a finite subset of $\mathbb{N}$, denoted by $L(v)$. A \textit{subpalette} of a palette $L$ of $G$ is a palette $L'$ of $G$ such that $L'(v)\subseteq L(v)$ for all $v\in V(G)$. We say a palette $L$ of the graph $G$ has \textit{order} $k$ if $L(v)\subseteq \{1,...,k\}$ for all $v\in V(G)$. Notationally, we write $(G,L)$ to represent a graph $G$ and a palette $L$ of $G$. We say that a $k$-coloring $c$ of $G$ is a \textit{coloring of $(G,L)$} provided $c(v)\in L(v)$ for all $v\in V(G)$. We say $(G,L)$ is \textit{colorable}, if there exists a coloring of $(G,L)$. We denote by $(G,\mathcal{L})$ a graph $G$ and a collection $\mathcal{L}$ of palettes of $G$. We say $(G,\mathcal{L})$ is \textit{colorable} if $(G,L)$ is colorable for some $L\in\mathcal{L}$, and $c$ is a {\em coloring} of $(G,\mathcal{L})$ if $c$ is a coloring of $(G,L)$ for some $L\in\mathcal{L}$. 

Let $G$ be a graph. We denote by $N_G(v)$ (or by $N(v)$ when there is no danger of confusion) the set of neighbors of $v$ in $G$.
Given $(G,L)$, consider a subset $X,Y\subseteq V(G)$ . 
We say that we \textit{update the palettes of the vertices in $Y$ with respect to $X$}
(or simply \textit {update $Y$ with respect to $X$}), if for all $y\in Y$ we set $$L(y)=L(y)\setminus (\bigcup_{u\in N(y)\cap X \text{ with } |L(u)|=1}L(u)).$$ 
When $Y=V(G)$ and $X$ is the set of all vertices $x$ of $G$ with $|L(x)|=1$,
we simply say that we \textit{update $L$}.
Note that updating can be carried out in time $O(|V(G)|^2)$. By reducing to an instance of $2$-SAT, which Aspvall, Plass and Tarjan \cite{PSAT} showed can be solved in linear time, Edwards \cite{2SAT} proved the following:

\begin{lemma}\label{check} There is an algorithm with the following specifications:

\bigskip

{\bf Input:} A palette $L$ of a graph $G$ such that $|L(v)|\leq 2$ for all $v\in V(G)$.

\bigskip

{\bf Output:} A coloring of $(G,L)$, or a determination that none exists.

\bigskip

{\bf Running time:} $O(|V(G)|^2)$.

\end{lemma}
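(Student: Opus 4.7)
The plan is to reduce the list-coloring instance $(G,L)$ to an equivalent instance of 2-SAT with $O(|V(G)|^2)$ variables and clauses, and then invoke the linear-time 2-SAT algorithm of \cite{PSAT}. Before building the reduction, I would first update $L$ (which costs $O(|V(G)|^2)$, as noted in the paragraph preceding the lemma): while some vertex has a singleton palette, remove its forced color from the palettes of its neighbors. If at any point some $L(v)$ becomes empty, I would stop and report that $(G,L)$ is not colorable. After updating, every vertex $v$ either has $|L(v)|=1$, in which case its color is fixed and all adjacency constraints involving it have already been propagated, or $|L(v)|=2$.

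Next I would encode the reduced instance as a 2-SAT formula $\Phi$. For each vertex $v$ with $L(v)=\{a_v,b_v\}$ I introduce a boolean variable $x_v$, interpreting $x_v=\text{true}$ as ``$v$ receives color $a_v$'' and $x_v=\text{false}$ as ``$v$ receives color $b_v$''. For every edge $uv$ of $G$ between two such vertices, and for every color $c\in L(u)\cap L(v)$, I add a single 2-clause forbidding the simultaneous assignment of $c$ to both endpoints; this is a clause of the form $(\neg\ell_u\vee\neg\ell_v)$, where $\ell_u\in\{x_u,\neg x_u\}$ is the literal meaning ``$u$ is colored $c$''. Edges with at least one fixed endpoint need no clause, since updating has already removed impossible colors. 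Each edge contributes at most two 2-clauses, so $\Phi$ has $O(|V(G)|^2)$ clauses and is built in $O(|V(G)|^2)$ time.

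Finally I would run the algorithm of \cite{PSAT} on $\Phi$ in linear time. If $\Phi$ is unsatisfiable, $(G,L)$ has no coloring; otherwise a satisfying assignment, together with the colors already fixed by the updating, selects one color from each $L(v)$, and the clauses of $\Phi$ guarantee that adjacent vertices are assigned distinct colors. Correctness amounts to a one-line bijection between colorings of $(G,L)$ and satisfying assignments of $\Phi$. The total running time is dominated by the $O(|V(G)|^2)$ updating and construction phases.

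I do not anticipate a serious obstacle here: the reduction is essentially standard, and the only slightly delicate point is handling edges that touch vertices with singleton palettes. Dealing with those during the preprocessing update step (rather than inside $\Phi$) keeps the 2-SAT encoding clean and ensures that $\Phi$ contains only proper 2-clauses, as required by \cite{PSAT}.
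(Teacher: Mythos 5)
Your proposal is correct and is exactly the approach the paper takes: the lemma is not proved in the paper but is quoted from Edwards, who obtained it by precisely this reduction to a 2-SAT instance solved with the linear-time algorithm of Aspvall, Plass and Tarjan. Your preprocessing by updating singleton palettes and the per-edge, per-common-color clauses form the standard encoding, and the running-time accounting is right.
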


\noindent Let $G$ be a graph. A subset $S$ of $V(G)$ is called \textit{monochromatic} with respect to a given coloring $c$ of $G$ if $c(u)=c(v)$ for all $u,v\in S$. Let $L$ be palette of $G$, and $X$ a set of subsets of $V(G)$.
We say that $(G,L,X)$  is {\em colorable} if there is a coloring $c$ of $(G,L)$ such that $S$ is monochromatic with respect to $c$ for all $S \in X$. 
A triple $(G',L',X')$ is a \textit{restriction} of $(G,L,X)$ if $G'$ is an 
induced subgraph of $G$, $L'$ is a subpalette of  $L|_{V(G')}$, and $X'$ is a 
set of subsets of $V(G')$ such that if $S \in X$ then $S \cap V(G') \in X'$.
Let $\mathcal{P}$ be a set of restrictions of $(G,L,X)$.
We say that $\mathcal{P}$ is {\em colorable} if at least one element of $\mathcal{P}$ is colorable.  If $\mathcal{L}$ is a set of palettes of $G$, we
write $(G,\mathcal{L},X)$ to mean the set of restrictions $(G,L',X)$ where
$L' \in \mathcal{L}$. The proof of \ref{check} is easily modified to obtain the 
following generalization \cite{2SAT+}:

\begin{lemma}\label{checkSubsets} There is an algorithm with the following specifications:

\bigskip

{\bf Input:} A palette $L$ of a graph $G$ such that $|L(v)|\leq 2$ for all $v\in V(G)$, together with a set $X$ of subsets of $V(G)$.

\bigskip

{\bf Output:} A coloring of $(G,L,X)$, or a determination that none exists.

\bigskip

{\bf Running time:} $O(|X||V(G)|^2)$.

\end{lemma}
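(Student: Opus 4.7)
The plan is to mimic the 2-SAT reduction used by Edwards in the proof of \ref{check} and add clauses encoding the monochromaticity constraints coming from $X$. First I would update $L$; if in the course of updating any $L(v)$ becomes empty, or if some $S\in X$ contains vertices whose forced colors after propagation disagree, report that $(G,L,X)$ is not colorable. After this preprocessing, every vertex $v$ has $|L(v)|\in\{1,2\}$.

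For each vertex $v$ with $L(v)=\{a_v,b_v\}$ of size $2$, introduce a Boolean variable $x_v$ with the convention that $x_v$ is true iff $c(v)=a_v$; a vertex with a singleton palette is treated as a Boolean constant. As in Edwards' original argument, for each edge $uv$ of $G$ the constraint $c(u)\ne c(v)$ translates into a constant number of $2$-CNF clauses, contributing $O(|V(G)|^2)$ clauses in total from adjacency.

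The new ingredient is encoding ``$S$ is monochromatic'' for each $S\in X$. I would fix an arbitrary enumeration $v_1,\ldots,v_{|S|}$ of $S$ and, for each consecutive pair $(v_i,v_{i+1})$, add clauses asserting $c(v_i)=c(v_{i+1})$; transitivity of equality then forces all of $S$ to be monochromatic. The key observation is that $c(u)=c(v)$ between two vertices of palette size at most $2$ is $2$-CNF encodable: for each color $t\in L(u)\cap L(v)$ the atomic condition ``$c(u)=t$'' is a literal in $x_u$ (or a Boolean constant), and likewise for $v$, so the biconditional ``$c(u)=t \iff c(v)=t$'' becomes a conjunction of at most four $2$-CNF clauses; if $L(u)\cap L(v)=\emptyset$ we instead reject. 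Summed over $S\in X$ this adds $O\bigl(\sum_{S\in X}|S|\bigr)=O(|X|\cdot |V(G)|)$ clauses.

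Thus the whole 2-SAT instance has $O(|V(G)|)$ variables and $O(|V(G)|^2+|X|\cdot|V(G)|)$ clauses, and is solved in linear time in its size by \cite{PSAT}, giving the stated $O(|X||V(G)|^2)$ running time (a loose but sufficient bound). A satisfying assignment translates directly into a coloring of $(G,L,X)$, and any coloring of $(G,L,X)$ conversely yields a satisfying assignment, so unsatisfiability of the 2-SAT instance certifies that $(G,L,X)$ is not colorable. There is no serious obstacle here beyond Edwards' argument; the only substantive new point to verify is that equality (not just disequality) of two palette-$2$ vertices is $2$-CNF expressible, which the above encoding accomplishes.
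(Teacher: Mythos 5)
The paper gives no proof of \ref{checkSubsets} at all: it simply remarks that the proof of \ref{check} is easily modified and cites a private communication \cite{2SAT+}. Your proposal supplies exactly the modification that is intended, namely a $2$-SAT reduction in which each monochromaticity constraint $S\in X$ is broken into $|S|-1$ pairwise equality constraints along a spanning path of $S$, each encoded by $O(1)$ clauses; the variable count, clause count and running-time bookkeeping are all fine.

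One detail of your equality gadget is unsound as written and needs a small repair. You assert $c(u)=t\iff c(v)=t$ only for $t\in L(u)\cap L(v)$. If $L(u)=\{1,2\}$ and $L(v)=\{2,3\}$, the only clauses produced say $c(u)=2\iff c(v)=2$, and the assignment $c(u)=1$, $c(v)=3$ satisfies them (both sides false) while violating $c(u)=c(v)$; so a satisfying assignment of your $2$-SAT instance need not yield a coloring of $(G,L,X)$. The fix is immediate: range $t$ over $L(u)\cup L(v)$, reading ``$c(v)=t$'' as the constant false when $t\notin L(v)$ (equivalently, add unit clauses forcing $c(u),c(v)\in L(u)\cap L(v)$ before imposing the biconditional on the surviving colors). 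With that correction the reduction is sound and complete, and the rest of your argument goes through unchanged.
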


\bigskip

A subset $D$ of $V(G)$ is called a \textit{dominating set} if every vertex 
in $V(G)\setminus D$ is adjacent to at least one vertex in $D$.
Applying \ref{check} yields the following general approach for $3$-coloring a graph. Let $G$ be a graph, and suppose $D\subseteq V(G)$ is a dominating set. Initialize the order 3 palette $L$ of $G$ by setting $L(v)=\{1,2,3\}$ for all $v\in V(G)$. Consider a fixed $3$-coloring $c$ of $G[D]$, and let $L_c$ be the subpalette of $L$ obtained by updating the palettes of the vertices in $V(G)\setminus D$ with respect to $D$. By construction, $(G,L_c)$ is colorable if and only if the coloring $c$ of $G[D]$ can be extended to a $3$-coloring of $G$.  Since $|L_c(v)|\leq 2$ for all $v\in V(G)$,  \ref{check} allows us to  efficiently test
if $(G,L_c)$ is colorable. Let $\mathcal{L}$ to be the set of all such palettes $L_c$ where $c$ is a $3$-coloring of $G[D]$. 
It follows that $G$ is 3-colorable if and only if $(G,\mathcal{L})$ is colorable. Assuming we can efficiently produce a dominating set $D$ of bounded size, since there are at most $3^{|D|}$ ways to 3-color $G[D]$, it follows that we can efficiently construct $\mathcal{L}$ and test if  $(G,\mathcal{L})$ is colorable, and so we can decide if $G$ is 3-colorable in polynomial time. This method figures prominently in the polynomial time algorithms for the 3-COLORING problem for the class of $P_\ell$-free graphs where $\ell\leq 5$. However, this approach needs to be modified when considering the class of $P_\ell$-free graphs when $\ell\geq 6$, since a dominating set of bounded size may not exist. Very roughly, the techniques used in this paper may be described as such a modification.

In \cite{Tfree,PART1}, the following was shown:

\begin{lemma}\label{half1}There is an algorithm with the following specifications:

\bigskip

{\bf Input:} A $\{P_7,C_3\}$-free graph $G$.

\bigskip

{\bf Output:} A $3$-coloring of $G$, or a determination that none exists.

\bigskip

{\bf Running time:} $O(|V(G)|^7)$.

\end{lemma}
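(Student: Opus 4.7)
The plan is to apply the general $3$-coloring framework sketched in the introduction: identify a bounded-size ``skeleton'' subset $D \subseteq V(G)$ whose proper $3$-colorings can be enumerated, update palettes, and invoke \ref{check} on each resulting $2$-list-coloring instance. Since there are at most $3^{|D|}$ colorings of $G[D]$ and \ref{check} runs in $O(|V(G)|^2)$, the main challenge is to guarantee that such a $D$ exists, has bounded size (possibly together with a small number of additional guessed vertices), and that after updating, every remaining palette has size at most $2$.

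The first step is a structural result for connected $\{P_7, C_3\}$-free graphs, of the form ``every such graph either has a dominating set of bounded size, or contains a short dominating induced path, or decomposes into controlled pieces around such a path.'' The motivating case is an induced $P_6$, say $P = p_1\,p_2\,\cdots\,p_6$: since $G$ is $P_7$-free, a vertex $v$ of $G\setminus V(P)$ with no neighbor in $P$, together with a shortest $v$-to-$P$ path in $G$, would yield an induced $P_7$, so the vertices missed by $N[V(P)]$ are tightly constrained. Triangle-freeness gives crucial leverage: every $N(u)$ is a stable set, so a high-degree vertex sees at most two distinct colors in its neighborhood, and the updated palettes propagate aggressively.

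The second step is to extend $D$ to cover any ``far'' remainder. Using the structural result, I would either enumerate $O(|V(G)|^c)$ further templates for some constant $c \le 5$, so that together with the $O(|V(G)|^2)$ call to \ref{check} the total runtime is $O(|V(G)|^7)$, or alternatively collapse far parts into monochromatic subsets and close the instance using \ref{checkSubsets}. Either way, the final algorithm loops over a polynomially bounded collection of $3$-colored skeletons, runs the update step, and finishes via a $2$-list coloring subroutine; a valid $3$-coloring of $G$ exists if and only if one of the resulting instances is colorable.

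The hard part will be establishing the structural dichotomy and verifying that after guessing the skeleton and applying the update, every remaining vertex really has palette size at most $2$. In particular, one must handle all the ways an induced $P_6$ or an induced $5$-cycle can sit inside $G$, and show that in each case a bounded number of guessed vertices, together with their forced color classes (which are large in the triangle-free setting, since each such class lies in the union of a few stable neighborhoods), dominates the rest in the required sense. A secondary challenge is the case where $G$ contains no induced $P_6$ at all: here one needs a separate argument, likely by reducing to the $P_5$-free or $P_6$-free algorithms or by exploiting $C_3$-freeness directly to produce a small dominating set.
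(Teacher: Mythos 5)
First, a point of reference: this paper does not actually prove \ref{half1}. The lemma is quoted from the companion papers \cite{Tfree,PART1} (the triangle-free case), and the present paper's entire content is the complementary case \ref{half2}, where a triangle is used to seed a tripod. So there is no in-paper proof to compare your argument against; the real proof is a separate, substantial piece of work.

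Judged on its own terms, your proposal has a genuine gap: it is a plan rather than a proof, and the step that carries all the mathematical weight --- the ``structural dichotomy'' asserting that every connected $\{P_7,C_3\}$-free graph has a bounded dominating set, or a short dominating induced path, or a controlled decomposition around one --- is stated as something you ``would'' establish, not established. Everything else (enumerating $3^{|D|}$ colorings, updating, calling \ref{check} or \ref{checkSubsets}) is the generic framework already sketched in the paper's introduction and is the easy part. Moreover, the one concrete structural claim you do make is not correct as stated: if $P=p_1\hbox{-}\cdots\hbox{-}p_6$ is an induced $P_6$ and $v$ has no neighbor in $V(P)$, a shortest path from $v$ to $P$ may attach at an \emph{interior} vertex $p_i$, and the union need not contain an induced $P_7$ (the attachment vertex may have several neighbors on $P$, and a path ending at $p_3$ or $p_4$ extends $P$ by too little on either side). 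So ``the vertices missed by $N[V(P)]$ are tightly constrained'' does not follow from $P_7$-freeness alone; pinning down exactly how far vertices attach, and how the second and third neighborhoods of the chosen skeleton behave, is precisely the content of the triangle-free analysis in \cite{Tfree,PART1}. The exponent $7$ in the running time likewise cannot be verified from the proposal, since the size of the family of guessed skeletons is never bounded. As written, the proposal identifies the right general strategy but leaves the theorem unproved.
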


In this paper, we consider the case when the input graph contains a triangle and prove the following:

\begin{lemma}\label{half2}There is an algorithm with the following specifications:

\bigskip

{\bf Input:} A $P_7$-free graph $G$ which contains a triangle.

\bigskip

{\bf Output:} A $3$-coloring of $G$, or a determination that none exists.

\bigskip

{\bf Running time:} $O(|V(G)|^{24})$.

\end{lemma}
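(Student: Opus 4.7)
The plan is to reduce $3$-coloring of $G$ to a polynomial number of instances of \ref{checkSubsets}, each on palettes of size at most $2$. Fix a triangle $T=\{t_1,t_2,t_3\}$ in $G$ and partition the remaining vertices into $N_T = N(t_1)\cup N(t_2)\cup N(t_3)$ and $M = V(G)\setminus (T\cup N_T)$. A first structural observation I would record is that $P_7$-freeness forces every vertex of the connected component of $G$ containing $T$ to lie within distance $2$ of $T$: any shortest path of length $3$ from a vertex to $T$, extended along an edge of the triangle, produces an induced $P_7$. Components of $G$ that do not contain $T$ can be handled independently, recursing or invoking \ref{half1} on any triangle-free component.

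I would then loop over the $6$ proper $3$-colorings of $T$ and update palettes throughout $G$. After this step, any vertex adjacent to all three vertices of $T$ rules out the current branch; any vertex adjacent to exactly two of $T$ is assigned its unique remaining color; vertices in $N_T$ adjacent to exactly one vertex of $T$ have palette of size~$2$; and vertices in $M$ still have palette $\{1,2,3\}$. The only obstacle to invoking \ref{checkSubsets} directly is then the set $M$.

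The bulk of the work lies in shrinking palettes on $M$. The approach is to isolate a bounded-size \emph{skeleton} $S\subseteq V(G)$, chosen from among $O(|V(G)|^s)$ candidate tuples for some constant $s$, with the property that after guessing a $3$-coloring of $S\cup T$ and recording a suitable set $X$ of monochromatic constraints (tying together vertices of $M$ that must share a color in every valid extension), every remaining palette shrinks to size $\leq 2$. The structural engine here is a careful case analysis of how vertices of $M$ attach to $N_T$: because each $m\in M$ lies on some induced path $m-n-t$ with $n\in N_T$ and $t\in T$, any further neighbor of $m$ or of $n$ must interact with $T$ in a very restricted way, lest a $P_7$ appear. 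Iterating this observation --- first on the neighbors of $T$ and then on second neighbors in $M$, and chasing the ensuing bichromatic ``attachment patterns'' --- should yield a small dominating and parameterizing structure.

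Given such a skeleton theorem the algorithm is clean: enumerate candidate skeletons $S$ of bounded size; for each, enumerate the $3$-colorings of $S\cup T$; for each such coloring update palettes, build $X$, and invoke \ref{checkSubsets}. The total running time would be $O(|V(G)|^{s+2})$, matching the stated $O(|V(G)|^{24})$ with $s\approx 22$. The main obstacle will be proving the skeleton theorem: identifying a constant-size seed whose $3$-coloring, together with monochromatic constraints on $M$, forces every other vertex into a palette of size at most $2$. This requires a detailed structural analysis of $P_7$-free graphs containing a triangle, in particular of modules and twin-like vertices in $M$ whose common color can be encoded in $X$, and this is where the bulk of the paper's technical work would necessarily go.
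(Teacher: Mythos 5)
Your opening structural claim is false, and it is the only concrete piece of structure you offer. A shortest path of length $3$ from a vertex $v$ to the triangle, say $v-a-b-t_1$, extended along an edge of the triangle gives $v-a-b-t_1-t_2$, which is a $P_5$, not a $P_7$; and you cannot append $t_3$ because $t_1t_3$ is an edge. Indeed a $P_7$-free graph can contain vertices at distance $4$ from a triangle (hang a path of length $4$ off one triangle vertex: the longest induced path this creates has six vertices). The paper's own setup makes this explicit: after replacing the triangle by a maximal tripod $(A_1,A_2,A_3)$ it still must partition the rest of the graph into $X$ (neighbors of the tripod), $Y$ (second neighbors) \emph{and} a possibly nonempty set $Z$ of vertices farther away, and a substantial part of \ref{Reduce} is devoted to controlling $Y\cup Z$. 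So the premise that only a ``second neighborhood'' $M$ needs treatment already fails.

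Beyond that, the heart of your proposal --- the ``skeleton theorem'' asserting a bounded-size set $S$ whose coloring, plus monochromatic constraints, shrinks all palettes to size $2$ --- is exactly the statement you would need to prove, and you explicitly defer it. The paper's route to such a statement is quite different from guessing a skeleton around the bare triangle: it first grows the triangle into a maximal \emph{tripod}, a forced-monochromatic structure whose maximality guarantees that every vertex of $X$ attaches to exactly one $A_i$ and whose connectivity (\ref{tricomps}) supplies the paths needed in the $P_7$ arguments; it then performs a cleaning/contraction step, guesses $12$ vertices to produce $O(|V(G)|^{12})$ palettes under which the surviving list-$3$ vertices form a \emph{stable set} (proved via a homogeneous-set argument), and finally needs a separate, delicate lemma (\ref{lemma}, costing another $O(|V(G)|^9)$ branches) to kill the lists of that stable set --- this last set is not dominated by the guessed vertices and cannot be handled by updating alone. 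None of these ingredients appears in your sketch, so as written the proposal is a restatement of the goal rather than a proof.
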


Together, \ref{half1} and \ref{half2} give:

\begin{lemma}\label{4C}There is an algorithm with the following specifications:

\bigskip

{\bf Input:} A $P_7$-free graph $G$.

\bigskip

{\bf Output:} A $3$-coloring of $G$, or a determination that none exists.

\bigskip

{\bf Running time:} $O(|V(G)|^{24})$.

\end{lemma}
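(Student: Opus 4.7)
The plan is to reduce \ref{4C} to a straightforward dichotomy based on whether the input graph $G$ contains a triangle. First I would test whether $G$ contains a triangle by enumerating all triples of vertices of $G$ and checking whether any induces a $C_3$; this takes time $O(|V(G)|^3)$, which is negligible compared with the other running times involved. If no triangle is found, then $G$ is $\{P_7,C_3\}$-free, and I can directly invoke the algorithm of \ref{half1} to produce a $3$-coloring or certify that none exists, in time $O(|V(G)|^7)$. Otherwise $G$ is $P_7$-free and contains a triangle, so I apply the algorithm of \ref{half2}, which runs in time $O(|V(G)|^{24})$ and produces the desired output.

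The total running time is dominated by the triangle-containing case, yielding $O(|V(G)|^{24})$ as claimed, and correctness is immediate since every $P_7$-free graph either contains a triangle or is $\{P_7,C_3\}$-free, and in each case the invoked subroutine is guaranteed to return a valid $3$-coloring of $G$ or a correct determination that none exists. There is no real obstacle at this stage: the genuinely hard work lies in \ref{half1} and \ref{half2} themselves, and once those are in hand the combined algorithm is essentially just a two-line case split together with a cheap triangle search.
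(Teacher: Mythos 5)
Your proposal is correct and is exactly the combination the paper intends: the paper derives \ref{4C} directly from \ref{half1} and \ref{half2} via the same triangle/no-triangle case split, with the cheap $O(|V(G)|^3)$ triangle test being implicit. Nothing further is needed.
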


Given a graph $G$ and disjoint subsets $A$ and $B$ of $V(G)$, we say that $A$ is
{\em complete} to $B$ if every vertex of $A$ is adjacent to every vertex of
$B$, and that $A$ is {\em anticomplete} to $B$ if every vertex of $A$ is 
non-adjacent to every vertex of $B$. If $|A|=1$, say $A=\{a\}$, we write
``$a$ is complete (or anticomplete) to $B$'' instead of ``$\{a\}$ is complete
(or anticomplete) to $B$''.

Here is a brief outline of our algorithm \ref{half2}. We take advantage of the simple fact that all three-colorings of a triangle are the same (up to permuting colors), and, moreover, starting with the coloring of a triangle, the colors of certain other vertices are forced. In this spirit, we define a \textit{tripod} in a graph $G$ as a triple $(A_1,A_2,A_3)$ of disjoint subsets of $V(G)$ such that

\begin{itemize}

\item $A_1\cup A_2\cup A_3=\{a_1,...,a_m\}$,

\item $a_i\in A_i$ for $i=1,2,3$,

\item $a_1-a_2-a_3-a_1$ is a triangle in $G$, and 

\item letting $\{i,j,k\}=\{1,2,3\}$, for every $s\in\{1,...,m\}$, if $a_s\in A_i$, then $a_s$ has a neighbor in $A_j \cap \{a_1,..,a_{s-1}\}$ and a neighbor in $A_k \cap \{a_1,..,a_{s-1}\}$.

\end{itemize}

\noindent Let $G$ be a $P_7$-free graph which contains a triangle. The first step of the algorithm is to choose a maximal tripod $(A_1, A_2, A_3)$ in $G$. It is easy to see that in every $3$-coloring of $G$, each of the sets $A_1,A_2,A_3$ is monochromatic, thus if one of $A_1,A_2 ,A_3$ is not a stable set, the algorithm stops and outputs a determination that no $3$-coloring exists.  Let 
$A=A_1 \cup A_2 \cup A_3$.  We analyze the 
structure of $G$ relative to $(A_1,A_2,A_3)$ and efficiently construct 
polynomially many subsets $D$ of $V(G)$ such that for each of them 
$G[A \cup D]$ only has a bounded number of  $3$-colorings, and {\em almost} 
all vertices of $V(G) \setminus (A \cup D)$ have  a neighbor in $D$.  Ignoring 
the {\em almost} qualification, we are now done using \ref{check} in 
polynomially many subproblems. In order to complete the proof, we guess a few 
more  vertices that need to be added to $D$ to create a dominating set in $G$, 
or show that certain subsets of $V(G)$ are monochromatic in all coloring of
$G$, which allows us to delete some vertices of $G$ without changing 
colorability. The last step is polynomially many applications of 
\ref{checkSubsets}.

\bigskip

\noindent This paper is organized as follows. In section 2 we prove \ref{tripod} and in section 3 we prove \ref{cleaning}, both of which are pre-processing procedures. In section 4 we prove \ref{Reduce}, which reduces the sizes of the lists
of all the vertices in  the graph except for a special stable set. In section 5 we prove a lemma,  \ref{lemma}, that we will use to deal with the vertices of 
this special stable set. In Section 6 we verify that \ref{lemma} can be applied in our situation. Finally, in Section 7 we put all the results together,
and show that we have reduced the problem to polynomially many subproblems,
each of which can be solved using \ref{checkSubsets}.

\section{Tripods}

In this section, we introduce a way to partition a graph that contains a triangle so that we begin to gain understanding into monochromatic sets this triangle forces. Additionally, we show that further simplifications are possible in the 
case that the graph we are considering is $P_7$-free. 

Let $(A_1,A_2,A_3)$ be a tripod in a graph $G$. We say $(A_1,A_2,A_3)$ is \textit{maximal} if there does not exist a vertex in $V(G)\setminus (A_1\cup A_2\cup A_3)$ which has a neighbor in two of $A_1,A_2,A_3$. 

\begin{lemma}\label{pod}

For any tripod $(A_1,A_2,A_3)$ in a graph $G$, for $\ell=1,2,3$ each $A_\ell$ is monochromatic with respect to any $3$-coloring of $G$. Moreover, no
color appears in two of $A_1, A_2, A_3$.

\end{lemma}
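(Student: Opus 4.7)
The plan is to proceed by induction on the index $s$ in the enumeration $A_1\cup A_2\cup A_3=\{a_1,\ldots,a_m\}$ given by the definition of a tripod. Fix a $3$-coloring $c$ of $G$. The inductive statement will be: for every $s \in \{1,\ldots,m\}$, if $a_s \in A_\ell$ then $c(a_s)=c(a_\ell)$. Once this is established, both conclusions follow immediately: each $A_\ell$ is monochromatic (all its vertices share the color $c(a_\ell)$), and since $a_1,a_2,a_3$ form a triangle they receive three pairwise distinct colors, so no color appears in two of $A_1,A_2,A_3$.

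For the base case, the triangle $a_1\text{--}a_2\text{--}a_3\text{--}a_1$ forces $c(a_1),c(a_2),c(a_3)$ to be pairwise distinct, which trivially agrees with the inductive statement for $s \in \{1,2,3\}$. For the inductive step, suppose $s \geq 4$ and the claim holds for all $t<s$. Let $\ell \in \{1,2,3\}$ be such that $a_s \in A_\ell$, and let $\{i,j,k\}=\{1,2,3\}$ with $\ell = i$. By the defining property of a tripod, $a_s$ has a neighbor $a_t \in A_j$ with $t<s$ and a neighbor $a_{t'} \in A_k$ with $t'<s$. By the inductive hypothesis, $c(a_t)=c(a_j)$ and $c(a_{t'})=c(a_k)$. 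Since $a_s$ is adjacent to both $a_t$ and $a_{t'}$, the color $c(a_s)$ differs from both $c(a_j)$ and $c(a_k)$. As $c$ uses only three colors and the colors $c(a_1),c(a_2),c(a_3)$ are distinct by the base case, the only remaining option is $c(a_s)=c(a_i)=c(a_\ell)$, completing the induction.

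There is essentially no obstacle here; the lemma is a direct consequence of the fact that the tripod enumeration is built so that each new vertex is forced by two previously-placed neighbors in the other two classes, and the triangle $a_1a_2a_3$ pins down the three colors at the start. The only mild subtlety is to note that the tripod condition as stated is vacuous or automatically satisfied for $s \in \{1,2,3\}$ (one either has $\{a_1,\ldots,a_{s-1}\}$ too small, or uses the triangle edges directly), so the real content of the induction only kicks in at $s \geq 4$.
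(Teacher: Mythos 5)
Your proof is correct and follows essentially the same argument as the paper: induction on the tripod enumeration, with the triangle $a_1$--$a_2$--$a_3$--$a_1$ as the base case and the two forced neighbors in the other classes driving the inductive step. No issues.
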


\begin{proof}

Let $A_1\cup A_2\cup A_3=\{a_1,...,a_m\}$ and $c$ be a $3$-coloring of $G$. We proceed by induction. Since $a_1-a_2-a_3-a_1$ is a triangle, it follows that $\{c(a_1),c(a_2),c(a_3)\}=\{1,2,3\}$. Suppose \ref{pod} holds for $\{a_1,...,a_{s-1}\}$, where $s>3$. Let $\{i,j,k\}=\{1,2,3\}$ so that $a_s\in A_i$. Since $(A_1,A_2,A_3)$ is a tripod, it follows that $a_s$ has a neighbor in $A_j \cap \{a_1,..,a_{s-1}\}$ and a neighbor in $A_k \cap \{a_1,..,a_{s-1}\}$. Inductively, it follows that every vertex in $A_j \cap \{a_1,..,a_{s-1}\}$ is assigned color $c(a_j)$ and that every vertex in $A_k \cap \{a_1,..,a_{s-1}\}$ is assigned color $c(a_k$). Since $c$ is a $3$-coloring, it follows that $c(a_s)=c(a_i)$. This proves \ref{pod}.

\end{proof}

We say a tripod $(A_1,A_2,A_3)$ is \textit{stable} if $A_i$ is stable for $i=1,2,3$. By \ref{pod}, it follows that if graph is $3$-colorable, then every tripod is stable. 

\begin{lemma}\label{tricomps}

If $(A_1,A_2,A_3)$ is a stable tripod in $G$, then $G[A_j\cup A_k]$ is a connected bipartite graph for all distinct $j,k\in \{1,2,3\}$.

\end{lemma}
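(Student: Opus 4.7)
The plan is to get bipartiteness for free from the stability hypothesis, and to get connectedness by induction on the tripod enumeration $a_1,\dots,a_m$. Since $A_j$ and $A_k$ are each stable, $G[A_j\cup A_k]$ has no edges inside either part, so it is bipartite with bipartition $(A_j,A_k)$. Nothing more is needed for that half of the statement.

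For connectedness, I would prove by induction on $s\ge 3$ that $G\bigl[(A_j\cup A_k)\cap\{a_1,\dots,a_s\}\bigr]$ is connected for every pair $j\neq k$. The base case $s=3$ is immediate: the intersection is exactly $\{a_j,a_k\}$, and these two vertices are adjacent because $a_1\text{-}a_2\text{-}a_3\text{-}a_1$ is a triangle. For the inductive step, assume the claim for $s-1\ge 3$ and consider $a_s$. If $a_s\notin A_j\cup A_k$, then nothing is added to the subgraph and the inductive hypothesis carries over. Otherwise, by symmetry say $a_s\in A_j$; the fourth bullet of the definition of a tripod guarantees that $a_s$ has a neighbor in $A_k\cap\{a_1,\dots,a_{s-1}\}$, and this neighbor already lies in the connected subgraph furnished by induction, so connectedness is preserved when we add $a_s$.

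Honestly, there is no substantive obstacle here: the lemma is almost a direct unpacking of the tripod construction rule, which was crafted precisely so that at every step each newly introduced vertex is glued to the existing structure via neighbors in each of the other two parts. The only thing one needs to be careful about is that the tripod axiom only guarantees a neighbor in the \emph{other two} parts of the tripod, not in the same part — but this is exactly what connectedness of $G[A_j\cup A_k]$ requires, since edges within $A_j$ (or within $A_k$) are forbidden by stability anyway. Applying the induction for each of the three choices of $\{j,k\}\subseteq\{1,2,3\}$ and taking $s=m$ completes the proof.
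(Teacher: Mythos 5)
Your proof is correct and takes essentially the same route as the paper's: both hinge on the tripod axiom that each $a_s$ with $s>3$ has a neighbor in the other part among $\{a_1,\dots,a_{s-1}\}$. The paper phrases this as choosing a minimal-index vertex in a hypothetical disconnecting partition, while you run the induction forward vertex by vertex; these are the same argument.
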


\begin{proof}

Since $A_j$ and $A_k$ are stable, we only need to prove that $G[A_j\cup A_k]$ is connected. Suppose $A\cup B$ is a partition of $A_j\cup A_k$ such that both $A$ and $B$ are non-empty and $A$ is anticomplete to $B$. Since $a_1-a_2-a_3-a_1$ is a triangle, by symmetry, we may always assume $a_j,a_k\in A$. Choose $a_s\in B$ such that $s$ is minimal. It follows that $s>3$. By symmetry, we may assume $a_s\in A_j$. By definition, there exists $a_{s'}\in A_k\cap \{a_1,...,a_{s-1}\}$ adjacent to $a_s$. However, by minimality, $a_{s'}\in A$, contrary to $A$ being anticomplete to $B$. This proves \ref{tricomps}.

\end{proof}

We say a tripod $(A_1,A_2,A_3)$ in a graph $G$ is \textit{reducible} if for $\{i,j,k\}=\{1,2,3\}$ we have that $A_i$ is anticomplete to $V(G)\setminus (A_j\cup A_k)$. Suppose $(A_1,A_2,A_3)$ is a maximal reducible stable tripod in a graph $G$. By symmetry, we may assume that $A_1$ is anticomplete to $V(G)\setminus (A_2\cup A_3)$. Let $G_R$ be the graph obtained by deleting $A_1$ and contracting $A_2\cup A_3$ to an edge, that is, $V(G_R)= (V(G)\setminus (A_1\cup A_2\cup A_3)) \cup \{a_2',a_3'\}$ and

\begin{itemize}

\item $a'_2a'_3\in E(G_R)$,

\item $xy\in E(G_R)$ if and only if $xy\in E(G)$ for distinct $x,y\in V(G_R)\setminus \{a'_2,a'_3\}$,

\item $a'_2z\in E(G_R)$ if and only if $N_G(z)\cap A_2$ is non-empty where $z\in V(G_R)\setminus \{a'_2,a'_3\}$, and

\item $a'_3z\in E(G_R)$ if and only if $N_G(z)\cap A_3$ is non-empty where $z\in V(G_R)\setminus \{a'_2,a'_3\}$.

\end{itemize}

\noindent Note, $G_R$ can be constructed in time $O(|V(G)|^2)$. The following establishes the usefulness of the above reduction.
 
\begin{lemma} \label{G_R} 

Let $(A_1,A_2,A_3)$ be a maximal reducible stable tripod in a graph $G$ and assume that $A_1$ is anticomplete to $V(G)\setminus(A_2\cup A_3)$. Then the following hold:

\begin{enumerate}

\item If $G$ is a $P_7$-free graph, then $G_R$ is $P_7$-free.

\item If $G$ is connected, then $G_R$ is connected. 

\item $G$ is $3$-colorable if and only if $G_R$ is $3$-colorable, and specifically from a coloring of $G_R$ we can construct a coloring of $G$ in time $O(|V(G)|)$.

\end{enumerate}

\end{lemma}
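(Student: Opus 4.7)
The plan is to prove the three assertions in order, leveraging the partition $V(G) = A \cup B_0 \cup B_2 \cup B_3$, where $A = A_1 \cup A_2 \cup A_3$, and for $\ell \in \{2,3\}$, $B_\ell$ is the set of vertices of $V(G) \setminus A$ having a neighbor in $A_\ell$, and $B_0 = V(G) \setminus (A \cup B_2 \cup B_3)$. Since the tripod is maximal and $A_1$ is anticomplete to $V(G) \setminus (A_2 \cup A_3)$, the sets $B_0, B_2, B_3$ are pairwise disjoint and no vertex of $V(G) \setminus A$ has a neighbor in $A_1$; both facts are used throughout. By \ref{tricomps}, $G[A_2 \cup A_3]$ is connected bipartite.

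For assertion (3), given a $3$-coloring $c$ of $G$, \ref{pod} forces each $A_\ell$ to be monochromatic with three distinct colors across the $A_\ell$; set $c'(v) = c(v)$ for $v \in V(G_R) \setminus \{a_2', a_3'\}$ and $c'(a_\ell') = c(a_\ell)$ for $\ell \in \{2,3\}$, and verify properness edge by edge using the definition of $G_R$. Conversely, given a $3$-coloring $c'$ of $G_R$, set $c(v) = c'(v)$ outside $A$, $c(v) = c'(a_\ell')$ for $v \in A_\ell$ when $\ell \in \{2,3\}$, and $c(v)$ equal to the unique color in $\{1,2,3\} \setminus \{c'(a_2'),c'(a_3')\}$ for $v \in A_1$. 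Properness then follows from stability of the tripod, the fact that $A_1$ is anticomplete to $V(G) \setminus A$, and the rules defining the edges of $G_R$ incident to $a_2', a_3'$. Both translations are linear time. For assertion (2), suppose $G$ is connected; given $x, y \in V(G_R)$, take a $G$-path between suitable preimages and contract each maximal subsegment inside $A$ to a path in $\{a_2', a_3'\}$: such a segment enters and leaves $A$ only through $A_2 \cup A_3$ (because $A_1$ is anticomplete to $V(G) \setminus A$), so its entry and exit vertices lie in $B_2 \cup B_3$ and are therefore adjacent in $G_R$ to $a_2'$ or $a_3'$.

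For assertion (1), I argue by contrapositive: suppose $P = p_0 - p_1 - \cdots - p_6$ is an induced $P_7$ in $G_R$. Since $a_2' a_3'$ is an edge and $P$ is induced, if $V(P)$ contains both $a_2'$ and $a_3'$ they are consecutive on $P$. Case (a): $V(P) \cap \{a_2', a_3'\} = \emptyset$; then $P$ is already induced in $G$. Case (b): exactly one, say $p_i = a_2'$; the neighbors of $p_i$ in $P$ lie in $B_2$ while the remaining $p_j$ avoid $B_2$. If $i \in \{0, 6\}$, substituting any $v \in A_2$ adjacent to the unique $P$-neighbor of $p_i$ produces the desired induced $P_7$ in $G$. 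If $i \in \{1,\ldots,5\}$ and $N(p_{i-1}) \cap N(p_{i+1}) \cap A_2 \neq \emptyset$, any $v$ in that intersection serves as a single-vertex substitution. Otherwise, use \ref{tricomps} to pick a shortest $S$-to-$T$ path $u_0 u_1 \cdots u_{2k}$ in $G[A_2 \cup A_3]$, where $S = N(p_{i-1}) \cap A_2$ and $T = N(p_{i+1}) \cap A_2$; by minimality no interior $u_\ell$ lies in $S \cup T$, so splicing this path in place of $p_i$ gives a candidate induced sequence of $2k + 7 \geq 7$ vertices in $G$. Case (c) is analogous, with $p_i = a_2'$ and $p_{i+1} = a_3'$ replaced using a shortest path from $A_2 \cap N(p_{i-1})$ to $A_3 \cap N(p_{i+2})$ in $G[A_2 \cup A_3]$ (boundary sub-cases $i = 0$ or $i+1 = 6$ need only one chosen endpoint).

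The main obstacle is verifying that the spliced sequences in Cases (b) and (c) are induced. Most potential chords are ruled out immediately: intermediate $u_\ell \in A_2$ cannot be adjacent to $p_j \notin B_2$, and the shortest-path property rules out $u_\ell$-to-$p_{i-1}, p_{i+1}$ edges. The residual concern is a chord between an inserted $u_\ell \in A_3$ and some $p_j \in B_3$ with $j$ outside the splice window. The resolution is to show that such a chord, combined with the appropriate prefix or suffix of the spliced sequence, already exhibits an induced $P_7$ in $G$, yielding the desired contradiction. Establishing this last step requires a further short case analysis on the location of $j$ and on where along $u_0, \ldots, u_{2k}$ the chord attaches, but in each case one has enough vertices to extract a $P_7$ in $G$, completing the proof.
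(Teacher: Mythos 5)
Your treatment of assertions (2) and (3) is fine and essentially matches the paper, as are Case (a), the endpoint and common-neighbour sub-cases of Case (b), and Case (c): in Case (c) every vertex of $P$ other than the two splice anchors is non-adjacent in $G_R$ to both $a_2'$ and $a_3'$, hence anticomplete to $A_2\cup A_3$, so a shortest connecting path in $G[A_2\cup A_3]$ splices in with no chords. The genuine gap is exactly the ``residual concern'' you name in Case (b) with $p_i=a_2'$ interior and $N(p_{i-1})\cap N(p_{i+1})\cap A_2=\emptyset$: here only $a_2'$ lies on $P$, so the other vertices of $P$ are free to have neighbours in $A_3$, and your detour through $G[A_2\cup A_3]$ necessarily inserts interior vertices of $A_3$ which may be adjacent to any $p_j$ with $j\notin\{i-1,i,i+1\}$. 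You assert that each such chord ``combined with the appropriate prefix or suffix'' yields an induced $P_7$ in $G$, but you do not carry this out, and it is not a routine step: a single inserted $A_3$-vertex can simultaneously attach to several of $p_0,p_1,p_5,p_6$, and the candidate seven-vertex sequences one writes down all have further unguaranteed non-edges. As written, the proof of (1) is incomplete.

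The fix is a different choice of detour, which is what the paper does: route the replacement path for $a_2'$ through $G[A_1\cup A_2]$ (connected by \ref{tricomps}) rather than through $G[A_2\cup A_3]$. Since $A_1$ is anticomplete to $V(G)\setminus(A_2\cup A_3)$ and every vertex of $P$ other than $a_2'$ lies outside $A$, no vertex of $P$ has a neighbour in $A_1$; and adjacency of $P$-vertices to $A_2$ is fully controlled by the inducedness of $P$ in $G_R$ (only $p_{i-1},p_{i+1}$ lie in $B_2$). Hence a shortest $p_{i-1}$--$p_{i+1}$ path with interior in $A_1\cup A_2$ splices in with no chords at all, and the residual case analysis you were hoping to do simply disappears. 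With that substitution the rest of your argument goes through.
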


\begin{proof}

First, we prove \ref{G_R}.1. Suppose $P$ is a copy of $P_7$ in $G_R$. Since $G$ is $P_7$-free, it follows that $V(P)\cap \{a'_2,a'_3\}$ is non-empty. First, suppose $|V(P)\cap\{a'_2,a'_3\}|=1$. By symmetry, we may assume $a'_2\in V(P)$. Since $G$ is $P_7$-free, it follows that $a'_2$ is an interior vertex of $P$, and so we can partition $P$ as $P'-p'-a'_2-p''-P''$, where $P',P''$ are paths, possibly empty. By construction, both $p', p''$ have a neighbor in $A_2$, and $V(P)$ is anticomplete to $A_1$. Since by \ref{tricomps} $G[A_1 \cup A_2]$ is connected, there exists a path $Q$ with ends $p'$ and $p''$ and interior in $A_1 \cup A_2$. But now
$P'-p'-Q-p''-P''$ is a  path in $G$ of length at least $7$, a contradiction. 

Thus, it follows that both $a'_2,a'_3\in V(P)$, and so we can partition $P$ as $S'-a'_2-a'_3-T'$, where $S',T'$ are paths, possibly empty. If $V(S)\neq \emptyset$, let $s'$ be the neighbor of $a'_2$ in $S'$; define $t'$ similarly.
Now  $s'$ has a neighbor in $A_2$, $t'$ has a neighbor in $A_3$, and 
$V(P) \setminus \{a_2,a_3,s',t'\}$ is anticomplete to $A_2 \cup A_3$. Since by
\ref{tricomps} $G[A_2 \cup A_3]$ is connected, it follows that there is a path
Q from $s'$ to $t'$ and with interior in $A_2 \cup A_3$. But now
$S'-s'-Q-t'-T'$ is a path in $G$ of length at least $7$, a 
contradiction. This proves \ref{G_R}.1.

Next we prove \ref{G_R}.2. Suppose $G_R$ is not connected, and let
$V(G_R)=X \cup Y$ such that $X,Y$ are non-empty and anticomplete
to each other. Since $a_2'$ is adjaent to $a_3'$, we may assume that 
$a_2',a_3' \in X$. Let 
$X'=(X \setminus \{a_2',a_3'\}) \cup (A_1 \cup A_2 \cup A_3)$. Then
$V(G)=X' \cup Y$, and $X',Y$ are anticomplete to each other, and so $G$ is not 
connected. This proves \ref{G_R}.2.

Finally, we prove \ref{G_R}.3. Suppose $c$ is a $3$-coloring of $G$. And so, we define the coloring $c'$ of $G$ as follows: For every $v\in V(G_R)$ set
$$c'(v) = \begin{cases}
c(a_2) & \text{,  \hspace{2ex}if $v=a'_2$} \\ 
c(a_3) & \text{,  \hspace{2ex}if $v=a'_3$} \\
c(v) & \text{,  \hspace{2ex}otherwise} \\
\end{cases}.$$
\noindent By construction, it clearly follows that $c'$ is a $3$-coloring of $G_R$.

Next, suppose $\hat{c}$ is a $3$-coloring of $G_R$. Since $a'_2$ is adjacent to $a'_3$, it follows that $\hat{c}(a'_2)\ne \hat{c}(a'_3)$. Take $\tilde{c}_1$ so that $\{\tilde{c}_1, \hat{c}(a'_2), \hat{c}(a'_3)\}=\{1,2,3\}$. Define the coloring $\tilde{c}$ of $G$ as follows: For every $v\in V(G)$ set

$$\tilde{c}(v) = \begin{cases}
\hspace{1ex}\tilde{c}_1 & \text{,  \hspace{2ex}if $v\in A_1$} \\ 
 \hat{c}(a'_2) & \text{,  \hspace{2ex}if $v\in A_2$} \\
 \hat{c}(a'_3) & \text{,  \hspace{2ex}if $v\in A_3$} \\
 \hat{c}(v) & \text{,  \hspace{2ex}otherwise} \\
\end{cases}.$$
\noindent By construction, it clearly follows that $\tilde{c}$ is a $3$-coloring of $G$ and the construction of $\tilde{c}$ takes $ O(|V(G)|) $. This proves \ref{G_R}.3.

\end{proof}

\noindent We say a tripod $(A_1,A_2,A_3)$ is \textit{normal} if it is stable, maximal and not reducible.

\begin{lemma}\label{tripod} There is an algorithm with the following specifications:

\bigskip

{\bf Input:} A connected graph $G$.

\bigskip

{\bf Output:}

\begin{enumerate}

\item a determination that $G$ is not $3$-colorable, or

\item a connected triangle-free graph $G'$ with $|V(G')|\leq |V(G)|$ such that $G'$ is $3$-colorable if and only if $G$ is $3$-colorable, or

\item a connected graph $G'$ with $|V(G')| \leq |V(G)|$ such that $G'$ is $3$-colorable if and only if $G$ is $3$-colorable, together with a normal tripod $(A_1,A_2,A_3)$ in $G'$.

\end{enumerate}

\bigskip

{\bf Running time:} $O(|V(G)|^3)$.

\bigskip

\noindent Additionally, any $3$-coloring of $G'$ can be extended to a $3$-coloring of $G$ in time $ O(|V(G)^2|) $.

\end{lemma}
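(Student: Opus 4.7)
The plan is to iterate the reduction from \ref{G_R} until we detect non-$3$-colorability, reach a triangle-free graph, or reach a graph carrying a normal tripod. I maintain a connected graph $G'$, initialized to $G$; each pass tests whether $G'$ contains a triangle, and if not I return $G'$ as case~(2). Otherwise I pick a triangle, grow it to a maximal tripod, check stability and then reducibility, and either return case~(1), return case~(3), or reduce $G'$ to $G'_R$ and restart.

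Given a triangle $a_1a_2a_3$, I grow it into a maximal tripod $(A_1,A_2,A_3)$ directly from the definition: repeatedly find a vertex in $V(G')\setminus(A_1\cup A_2\cup A_3)$ whose neighborhood meets two of the parts and insert it into the third. Maintaining, for every outside vertex, three counters recording its adjacency into $A_1,A_2,A_3$ (updatable in $O(|V(G')|)$ per insertion), this phase costs $O(|V(G')|^2)$. Stability is then checked by scanning each $A_i$ for an internal edge; if any is found, \ref{pod} certifies that $G$ is not $3$-colorable and I return case~(1). Otherwise I test reducibility by asking, for each $i$, whether $A_i$ is anticomplete to $V(G')\setminus(A_1\cup A_2\cup A_3)$: if so for some $i$ I apply \ref{G_R} to construct $G'_R$, set $G'\leftarrow G'_R$, and restart; if not for any $i$, the tripod is normal and I return case~(3).

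Correctness is immediate from \ref{pod} and \ref{G_R}: reductions preserve connectedness and $3$-colorability, stability is necessary by \ref{pod}, and \ref{G_R}.3 lifts a $3$-coloring of $G'_R$ back to $G'$ in $O(|V(G')|)$ time, giving $O(|V(G)|^2)$ total lift-back work across the at most $|V(G)|$ reductions (each reduction strictly decreases $|V(G')|$). The main obstacle will be staying inside the $O(|V(G)|^3)$ budget: every per-iteration step other than the triangle search is $O(|V(G')|^2)$ and sums to $O(|V(G)|^3)$, but a naive brute-force triangle search each pass costs $O(|V(G')|^3)$ and would blow the total up to $O(|V(G)|^4)$. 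To avoid this I will use that across a reduction the only vertices whose adjacency changes are $a_2',a_3'$, and that maximality of the reduced tripod prevents any vertex of $V(G')\setminus(A_1\cup A_2\cup A_3)$ from being adjacent to both of them; hence every triangle of $G'_R$ either lives in $V(G')\setminus(A_1\cup A_2\cup A_3)$ or meets $\{a_2',a_3'\}$ in exactly one vertex, the latter enumerable in $O(|V(G'_R)|^2)$ by scanning edges among the neighbors of $a_2'$ and $a_3'$. Maintaining a single list of surviving triangles of the former type across iterations, initialized in $O(|V(G)|^3)$ at the very start and pruned whenever a vertex is deleted, absorbs the remaining triangle-search work and yields the claimed bound.
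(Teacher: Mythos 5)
Your proposal is correct and follows essentially the same route as the paper: grow a maximal tripod from a triangle, use \ref{pod} to reject non-stable tripods, and iterate the $G_R$ reduction of \ref{G_R} until the graph is triangle-free or the tripod is normal (checking stability after growth rather than during it is an immaterial difference). If anything you are more careful than the paper on the one delicate point, namely that re-running a naive triangle search after each of the up to $|V(G)|$ reductions would cost $O(|V(G)|^4)$; your amortized maintenance of triangles, using that no vertex of $G_R$ outside $\{a_2',a_3'\}$ is adjacent to both $a_2'$ and $a_3'$, correctly recovers the claimed $O(|V(G)|^3)$ bound, which the paper asserts without justification.
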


\begin{proof}

In time $O(|V(G)|^3)$, we can determine if $G$ is triangle-free. If so return the triangle-free graph $G'=G$ and halt. Otherwise, we may assume there exist $a_1,a_2,a_3\in V(G)$ such that $a_1-a_2-a_3-a_1$ is a triangle. Next, we try and grow this triangle into a normal tripod. Initialize $A_i=\{a_i\}$ for $i=1,2,3$. Assume $A_1\cup A_2\cup A_3=\{a_1,...,a_m\}$ and consider $v\in V(G)\setminus (A_1\cup A_2\cup A_3)$ such that $v$ has a neighbor in $A_i$ and $A_j$ for $\{i,j,k\}=\{1,2,3\}$. If $v$ is anticomplete to $A_k$, then set $a_{m+1}=v$ and $ A_k=A_k\cup \{a_{m+1}\}$. If $v$ has a neighbor in $A_k$, then, by \ref{pod}, we may return that $G$ is not $3$-colorable and halt. Repeat this procedure again until either we determine that $G$ is not $3$-colorable or there does not exists any $v\in V(G)\setminus (A_1\cup A_2\cup A_3)$ such that $v$ has a neighbor in $A_i$ and $A_j$ for $\{i,j,k\}=\{1,2,3\}$. By construction, this procedure either halts or yields a maximal, stable tripod $(A_1,A_2,A_3)$ from the triangle $a_1-a_2-a_3-a_1$. In time $O(|V(G)|^2)$, we can verify if $A_i$ is anticomplete to $V(G)\setminus (A_j\cup A_k)$ for some $i\in \{1,2,3\}$, that is, if $(A_1,A_2,A_3)$ is reducible. If not, then return the normal tripod $(A_1,A_2,A_3)$ for $G'=G$ and halt. Otherwise, by symmetry, we may assume $A_1$ is anticomplete to $V(G)\setminus (A_2\cup A_3)$. By \ref{G_R}.2, it follows that $G_R$ is connected, 
$G$ is $3$-colorable if and only if $G_{R}$ is $3$-colorable, and a $ 3 $-coloring of $G_{R}$ can be extended to a $ 3 $-coloring of $G$ in time $|V(G)|$. Now, repeat the steps described above with $G_R$.  This procedure can be carried out in time $O(|V(G)|^3)$.  This proves \ref{tripod}.

\end{proof}

\section{Cleaning}

In this section, we identify a configuration that, if present in $G$,  allows 
us to efficiently find a graph $G'$ with $|V(G')|< |V(G)|$ which is 
$3$-colorable if and only $G$ is $3$-colorable. 

Let $G$ be a graph, and let  $(A_1,A_2,A_3)$ be a tripod in $G$.
We say $v\in V(G)$ is a \textit{connected vertex} if 
$G[N_G(v)]$ is connected. We say that a graph is {\em $(A_1,A_2,A_3)$-clean} if 
every  connected vertex in $V(G) \setminus (A_1 \cup A_2 \cup A_3)$ has a 
neighbor in $A_1 \cup A_2 \cup A_3$, and $(A_1,A_2,A_3)$ is a normal
tripod in $G$.

\begin{lemma}\label{cleaning} Let $(A_1,A_2,A_3)$ be a normal tripod in $G$.
There is an algorithm with the following specifications:

\bigskip

{\bf Input:} A connected $P_7$-free graph $G$.

\bigskip

{\bf Output:} A connected $(A_1,A_2,A_3)$-clean  $P_7$-free graph $G'$ with $|V(G')|\leq |V(G)|$ such that $G'$ is $3$-colorable if and only if $G$ is $3$-colorable, or a determination that $G$ is not $3$-colorable. 

\bigskip

{\bf Running time:} $O(|V(G)|^4)$.

\bigskip

\noindent Additionally, any $3$-coloring of $G'$ can be extended to a 
$3$-coloring of $G$ in time $O(|V(G)|^2)$.

\end{lemma}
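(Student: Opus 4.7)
The plan is iterative: while $G$ contains a \emph{bad} vertex $v$, by which I mean a connected vertex $v \in V(G) \setminus A$ (writing $A := A_1 \cup A_2 \cup A_3$) with $N_G(v) \cap A = \emptyset$, I reduce to a smaller graph (typically $G \setminus v$) that preserves the required properties, or else detect non-3-colorability. Iterating at most $|V(G)|$ times produces the desired clean graph $G'$.

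For each candidate bad vertex $v$, I first test whether $G[N_G(v)]$ is bipartite. If it is not, then any 3-coloring of $G$ would need to properly 2-color $N_G(v)$ using the two colors different from $c(v)$, which is impossible, so I return that $G$ is not 3-colorable. Otherwise $G[N_G(v)]$ is connected and bipartite, hence has a unique bipartition $N_G(v) = B_1 \cup B_2$. I then set $G'' := G \setminus v$; the forward direction ``$G$ 3-colorable implies $G''$ 3-colorable'' is immediate by restriction, while the reverse direction is the crux. Concretely, I must show that no 3-coloring of $G''$ assigns all three colors to $N_G(v)$, for then $v$ can always be colored with the missing color. This is the main obstacle: I would argue by contradiction, assuming such a coloring exists, picking two vertices $u, w \in N_G(v)$ of different colors together with a walk between them inside $G[N_G(v)]$, extending via a shortest path from $N_G(v)$ to $A$ (which is short because a connected $P_7$-free graph has diameter at most $5$), and appending a walk within the tripod, which is tightly constrained by the normality of $(A_1, A_2, A_3)$ and by the fact that each vertex outside $A$ has neighbors in at most one $A_i$. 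Carrying out the case analysis on the color pattern of the three witnesses and on the length of the shortest path should exhibit an induced $P_7$ in $G$, contradicting the hypothesis.

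The output $G'$ inherits the required properties routinely: induced subgraphs of $P_7$-free graphs are $P_7$-free; removing a vertex whose neighborhood is connected preserves connectedness (such a $v$ is not a cut vertex, since any two vertices of $N_G(v)$ are connected inside $G[N_G(v)]$); and $(A_1, A_2, A_3)$ remains a normal tripod in $G''$ since we touch no vertex of $A$ and add no edges, and since $v$ has no neighbor in $A$, removing $v$ does not make any $A_i$ anticomplete to $V(G'') \setminus A$, so non-reducibility is preserved along with stability and maximality.

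For the running time, one pass scans all $v \in V(G) \setminus A$ and tests connectedness of $G[N_G(v)]$ via a BFS inside the neighborhood, costing $O(|V(G)|^2)$ per vertex and $O(|V(G)|^3)$ per pass; at most $|V(G)|$ passes yield the claimed bound $O(|V(G)|^4)$. To extend a 3-coloring of $G'$ back to $G$, I reinsert deleted vertices in reverse order of removal and assign each the color missing from $c(N_G(v))$ (a set of size at most $2$ by the key claim), at $O(|V(G)|)$ per vertex and $O(|V(G)|^2)$ overall.
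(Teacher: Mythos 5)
Your overall skeleton (iteratively eliminate connected vertices $v$ outside $A$ with no neighbor in $A$, first rejecting if some $G[N(v)]$ is not bipartite) matches the paper's, but your reduction step has a genuine gap: you pass to $G\setminus v$ and claim that no $3$-coloring of $G\setminus v$ assigns all three colors to $N_G(v)$. That claim is false in general, and the sketch you give for it cannot work. The fact that $G[N_G(v)]$ is connected and bipartite forces the two sides of the bipartition to be monochromatic only in a $3$-coloring of $G$ \emph{itself}, because there $N_G(v)$ must avoid the color of $v$; once $v$ is deleted, a $3$-coloring of $G\setminus v$ is free to use all three colors on $N_G(v)$ --- already a $P_3$ inside $N_G(v)$ can receive colors $1,2,3$. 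Moreover, the argument you propose tries to derive a structural contradiction (an induced $P_7$ in $G$) from a hypothesis about a particular coloring of $G\setminus v$; since $P_7$-freeness is a property of the graph independent of any coloring, no case analysis on ``the color pattern of the witnesses'' can manufacture an induced path. Consequently neither the implication ``$G\setminus v$ is $3$-colorable $\Rightarrow$ $G$ is $3$-colorable'' nor your extension procedure (color each reinserted $v$ with ``the missing color'') is justified.

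The paper avoids this by not merely deleting $v$: it observes that $(\{v\},B_1,B_2)$ is a maximal reducible stable tripod and applies the reduction of \ref{G_R}, i.e., it deletes $v$ \emph{and contracts} $N_G(v)$ to a single edge $a_2'a_3'$ (handling $|N(v)|=1$ separately). In the contracted graph every $3$-coloring trivially uses only two colors on $\{a_2',a_3'\}$, so by \ref{G_R}.3 it lifts to a coloring of $G$ in which $B_1$ and $B_2$ are monochromatic and $v$ receives the third color; \ref{G_R}.1--2 guarantee the contracted graph stays connected and $P_7$-free, which is where the $P_7$-freeness of $G$ is actually used. To salvage your approach you would need to replace plain deletion by this contraction, or else prove that \emph{some} $3$-coloring of $G\setminus v$ uses only two colors on $N_G(v)$ --- which your argument does not do. The rest of your write-up (preservation of connectedness, of the normal tripod, and the running-time accounting) is fine.
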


\begin{proof} 

First, for every $v \in V(G)$, check if  $G[N(v)]$ is $2$-colorable.
This can be done in time $O(|V(G)|^3)$, and if the answer is ``no'' for
some $v$, we can stop and output that $G$ is not $3$-colorable.

Thus we may assume that $G[N(v)]$ is $2$-colorable for every $v \in V(G)$.
Let $Y$ be the set of vertices of $V(G) \setminus (A_1 \cup A_2 \cup A_3)$ 
that are anticomplete to $A_1 \cup A_2 \cup A_3$.
In time $O(|V(G)|^3)$, we can find a connected vertex in  $Y$
or determine that none exists.  If no vertex in $Y$ is connected, output 
$G'=G$.

Suppose $v \in Y$ is connected. Define $G_v$ as follows. If 
$|N(v)|=1$, let $G_v=G\setminus v$. Otherwise, let $(A,B)$ be the unique 
bipartition of $G[N(v)]$. It follows that $\{v\}\cup A\cup B$ is a maximal 
reducible  stable tripod in $G$. Let $G_v$ be the graph obtained from $G$ by 
deleting $v$ and contracting $N_G(v)$ to an edge, that is, $G_R$ with respect 
to  $\{v\}\cup A\cup B$. Now, by~\ref{G_R}, it follows that $G_v$ is 
connected, and that $G_v$ is 
colorable if and only if $G$ is colorable. Moreover, since $v \in Y$,
it follows that $(A_1,A_2,A_3)$ is a normal tripod in $G_v$.

Now recursively applying the procedure to $G_v$, \ref{cleaning} follows.
\end{proof}

Given a graph $G$, we say that $X \subseteq V(G)$ is a {\em a homogeneous set in $G$} if $X \neq V(G)$, and every vertex of $V(G) \setminus X$ is either
complete or anticomplete to $X$.  We end the section with the following lemma.

\begin{lemma}\label{hset} Let $X$ be a  homogeneous set in a connected graph $G$ such that $G[X]$ is connected, $X\ne V(G)$ and $|X|>1$. Then $X$ contains a connected vertex.

\end{lemma}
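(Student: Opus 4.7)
The plan is to show that in fact \emph{every} vertex of $X$ is a connected vertex. The key observation is that homogeneity forces the external neighborhood of every vertex in $X$ to be the same set, and this set serves as a ``universal connector'' inside the neighborhood of any $v \in X$.

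First I would set $Y = \{y \in V(G) \setminus X : y \text{ is complete to } X\}$ and note that $Y \neq \emptyset$. Indeed, since $G$ is connected and $X \neq V(G)$, there exists an edge from some $w \in V(G) \setminus X$ to some vertex of $X$, and by homogeneity $w$ is complete to $X$, so $w \in Y$. Moreover, homogeneity together with $Y \neq \emptyset$ shows that for every $v \in X$, the neighbors of $v$ outside $X$ are exactly $Y$; in other words, $N_G(v) = (N_G(v) \cap X) \cup Y$.

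Now fix any $v \in X$. Since $G[X]$ is connected and $|X|>1$, the vertex $v$ has at least one neighbor in $X$, so $N_G(v) \cap X \neq \emptyset$. By homogeneity, every vertex of $Y$ is complete to $X$, and in particular complete to $N_G(v) \cap X$. Consequently, in the induced subgraph $G[N_G(v)]$, any two vertices are joined by a path of length at most two: two vertices of $N_G(v)\cap X$ are connected through any vertex of $Y$, two vertices of $Y$ are connected through any vertex of $N_G(v) \cap X$, and a vertex of $Y$ is directly adjacent to a vertex of $N_G(v)\cap X$. Hence $G[N_G(v)]$ is connected, so $v$ is a connected vertex of $G$.

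Since this argument applies to every $v \in X$, the set $X$ contains a connected vertex (in fact consists entirely of them), which proves the lemma. There is no real obstacle here; the main point is simply to recognize that the universality provided by the homogeneous set $X$ combined with $Y \neq \emptyset$ immediately trivializes the connectivity of $G[N_G(v)]$.
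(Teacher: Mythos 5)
Your proof is correct and follows essentially the same argument as the paper: both use connectedness of $G$ plus homogeneity to get a nonempty set of external neighbors complete to $X$, and connectedness of $G[X]$ with $|X|>1$ to get a nonempty internal neighborhood, whence $G[N(v)]$ is connected. The only cosmetic difference is that you observe this for every $v\in X$ rather than a single vertex.
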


\begin{proof}
Consider $v\in X$ and define $X'=N(v)\cap X$ and $Y=N(v)\cap (V(G)\setminus X)$. Since $G$ is connected, it follows that $V(G)\setminus X$ is not anticomplete to $X$, and so $Y$ is non-empty. Since $G[X]$ is connected and $|X|>1$, it 
follows  that $X'$ is non-empty. Since $X$ is a homogeneous set, it follows that $Y$ is complete to $X'$, implying that $G[N(v)]$ is connected. This proves 
\ref{hset}.

\end{proof}

\section{Reducing the Graph}

The main result of this section is~\ref{Reduce}.  It allows us (at the 
expense of branching into polynomially many subproblems) to reduce the lists of some 
of the vertices of  the graph to size two, and get some control over the 
remaining vertices. More precisely, \ref{Reduce} reduces the problem to the
case when the set of vertices whose list has size three is stable, and the 
neighbors of every such vertex satisfy certain technical conditions.  These
conditions are designed with the goal of using \ref{lemma}. In \ref{MainLemma} 
we verify that the conclusion of~\ref{Reduce} is in fact sufficient for 
applying~\ref{lemma}.

For a fixed subset $X$ of $V(G)$, we say that a vertex \textit{$v\in V(G)\setminus X$ is mixed on an edge of $X$}, if there exist adjacent $x,y\in X$ such that $v$ is adjacent to $x$ and non-adjacent to $y$. Similarly,  we say a vertex \textit{$v\in V(G)\setminus X$ is mixed on a non-edge of $X$}, if there exist non-adjacent $x,y\in X$ such that $v$ is adjacent to $x$ and non-adjacent to $y$.

\begin{lemma}\label{Reduce}
Let $A=(A_1,A_2,A_3)$ be a normal tripod in a connected, $(A_1,A_2,A_3)$-clean $P_7$-free graph and partition $V(G)=A\cup X \cup Y \cup Z$, such that 
\begin{itemize}

\item $A=A_1\cup A_2\cup A_3$,

\item $X$ is  the set of vertices of $V(G)\setminus A$ with a neighbor in $A$,

\item $Y$ is  the set of vertices of $V(G)\setminus (A\cup X)$ with a neighbor in $X$, 

\item $Z=V(G) \setminus (A\cup X\cup Y)$.

\end{itemize}
For $i=1,2,3$, let $X_i$ be the set of vertices of $V(G)\setminus A$ with a 
neighbor in $A_i$.

There exists a set of $O(|V(G)|^{12})$ palettes $\mathcal{L}$ of $G$ such that

\bigskip

\noindent \textit{(a)} Each $L\in\mathcal{L}$ has order $3$ and $ |L(v)| \leq 2 $ for every $v \in A\cup X$, and

\bigskip

\noindent \textit{(b)} $G$ has a $3$-coloring if and only if $(G,\mathcal{L})$ is colorable.

\bigskip

\noindent Moreover, $\mathcal{L}$ can be computed in time $O(|V(G)|^{15})$.

\bigskip

\noindent  For each $L\in\mathcal{L}$, let $P_L$ be the set of 
vertices $y \in Y\cup Z$ with $|L(y)|=3$. Then the following hold:

\bigskip

\noindent \textit{(c)} $P_L$ is stable.

\bigskip

\noindent \textit{(d)} There exist subsets $X'\subset X$, $Y_0\subset Y$, and vertices $s_\ell\in X_\ell \cap X'$ for $\ell=1,2,3$, such that 
\begin{itemize}
\item $|L(x)|=1$ for all $x\in X'\cup Y_0$, and 
\item $Y_0$ is complete to $\{s_1,s_2,s_3\}$, and
\item letting $Y'$ be the set of vertices in $Y\cup Z$ with a neighbor in 
$X'\cup Y_0$, we have that $P_L$ is anticomplete to $(Y\cup Z) \setminus Y'$.
\end{itemize}
\bigskip

\noindent Additionally, for every $(i,j,k)=(1,2,3)$ and  $L\in \mathcal{L}$ the
 following hold:

\bigskip

\noindent \textit{(e)} If $v\in Y'$ with $L(v)=\{i,j\}$, then there exists $u\in N(v)\cap (X'\cup Y_0)$ with $L(u)=\{k\}$,

\bigskip

\noindent \textit{(f)} If $v\in X'\cap X_j$ with $L(v)=\{i\}$, then either there exists $u\in N(v)$ such that $L(u)=\{k\}$, or every $y \in Y$ with a neighbor in $X_j$ has $L(y)=\{j\}$.

\bigskip

\noindent \textit{(g)} If $v\in Y_0$ with $L(v)=\{i\}$, then there exist $u,w\in N(v) \cap \{s_1,s_2,s_3\}$ such that $L(u)=\{k\}$ and $L(w)=\{j\}$. 

\end{lemma}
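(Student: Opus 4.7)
The plan is to branch over a bounded number of vertex--color guesses and then produce the palettes by constraint propagation via \emph{update}. By~\ref{pod}, any $3$-coloring of $G$ restricts, on $A$, to one of the six bijections between $(A_1,A_2,A_3)$ and $\{1,2,3\}$. For each such bijection $c_0$, we initialize $L(a)=\{c_0(a)\}$ for $a\in A$ and update; since $(A_1,A_2,A_3)$ is a maximal tripod, every $x\in X$ has neighbors in exactly one $A_i$, so $|L(x)|=2$, while $|L(v)|=3$ for each $v\in Y\cup Z$. This gives property~(a) for the base palettes.

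For each base palette we guess seeds $s_1\in X_1, s_2\in X_2, s_3\in X_3$ together with a color for each (contributing a factor of $O(|V(G)|^3)$ from the vertex choice and a constant from the color choice), and additionally guess a bounded number (nine, roughly) of further vertex--color pairs in $V(G)$ whose role is to force the colors of vertices in $Y\cup Z$ that would otherwise form bad configurations in $P_L$. Counted together, this produces $O(|V(G)|^{12})$ palettes; each is built by a constant number of \emph{update} passes of cost $O(|V(G)|^2)$ each, giving the promised $O(|V(G)|^{15})$ running time. Property~(b) follows because, in any true $3$-coloring of $G$, there is a choice of coloring of $A$, of seeds $s_\ell$, and of the extra guessed vertex--color pairs that are all consistent with the coloring, so the coloring survives in at least one palette.

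Next I define $X':=\{x\in X:|L(x)|=1\}$ and $Y_0:=\{y\in Y:y\text{ is complete to }\{s_1,s_2,s_3\}\text{ and }|L(y)|=1\}$, and I \emph{prune} the family by discarding any palette that fails~(e), (f), or~(g). For~(g): every $y\in Y_0$ must actually see, among $\{s_1,s_2,s_3\}$, the two colors forbidden by $L(y)$; otherwise discard. For~(f): whenever $v\in X'\cap X_j$ with $L(v)=\{i\}$ lacks a neighbor $u$ with $L(u)=\{k\}$, the alternative condition that every $y\in Y$ adjacent to $X_j$ has $L(y)=\{j\}$ must hold; otherwise discard. For~(e): every $v\in Y'$ with $|L(v)|=2$ must have a witness in $X'\cup Y_0$ carrying the missing color; otherwise discard. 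Each discard is safe for~(b) because in any extending $3$-coloring of $G$ the relevant witnesses exist by construction of our guesses.

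The main obstacle is producing a family large enough that~(c) and~(d) also hold for every retained $L$, i.e.\ that the stable set $P_L$ and the reachability structure via $Y'$ are automatic consequences of the guessing. Both come from $P_7$-freeness. If two 3-list vertices $p,q\in Y\cup Z$ were adjacent, then using \ref{tricomps} and the paths from $p,q$ through $X$ back to $A$, together with the seeds $s_1,s_2,s_3$ (which provide short certified routes into $A$), one either produces an induced $P_7$ in $G$ or shows that one of the guessed vertex--color pairs should already have reduced $L(p)$ or $L(q)$ --- contradicting $p,q\in P_L$. A parallel argument handles~(d): a neighbor $r$ of some $p\in P_L$ with $r\notin Y'$ would be at distance $\geq 2$ from every element of $X'\cup Y_0$, and combining a shortest path from $r$ to $A$ via $X$ with one from $s_\ell$ yields an induced $P_7$ unless the extra guessed vertices already cut off this possibility. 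The technical heart of the proof is choosing the constant number of extra guesses so that every such obstructing configuration in $P_7$-free $(A_1,A_2,A_3)$-clean graphs is covered; this case analysis --- driven by the structure of paths to $A$ and the maximality and normality of the tripod --- is the main work.
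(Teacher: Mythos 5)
There is a genuine gap, in two places.

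First, the entire technical content of the lemma lies in \emph{which} bounded set of vertex--colour pairs to guess, and you leave this unspecified (``nine, roughly, \dots whose role is to force the colors of vertices in $Y\cup Z$ that would otherwise form bad configurations''). The difficulty is that plain guessing-plus-updating cannot work: updating only propagates singleton lists to neighbours, whereas what is needed is to pin down the colours of certain \emph{unbounded} subsets of $Y\cup Z$. The paper's construction does this by classifying, for each colour $\ell$, the possible colourings into four types according to whether there is a path $x$--$y_1$--$y_2$ with $x\in X_\ell$ and prescribed colours on $y_1,y_2$ (and a further vertex $y_3$); the guessed quadruple $(p,q_1,q_2,q_3)$ is a witness for the type, and the type then \emph{forces} an entire derived set to be monochromatic --- either $M(S_\ell)$, the set of all neighbours of $p$ anticomplete to $\{q_1,q_2\}$ (forced to colour $\ell$ by maximality of the witness), or $H(S_\ell)$, the set of all $\ell$-caps (forced to colour $\ell$ because no type I--III configuration exists). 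These forced sets are what make the later updating rounds shrink the lists on $X'$, $Y_0$ and $Y'$; they are a consequence of the global type of the colouring, not of local propagation from the guessed singletons. Your framework as described does not produce them, and without them neither (b) nor the discard steps you propose are justified (you cannot ``safely discard'' a palette for violating (e)--(g) unless you have already shown that every true colouring survives in some palette that satisfies them).

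Second, your argument for (c) has the wrong shape. Two adjacent vertices of $P_L$ do \emph{not} directly yield an induced $P_7$; indeed $P_L$ could a priori contain a large clique far from $A$. The paper instead proves that no vertex of $X''_S$, $Y'_S$ or $T_S\setminus P_L$ is mixed on an edge of $P_L$, so every connected component of $P_L$ is a homogeneous set, and then invokes \ref{hset} together with the hypothesis that $G$ is $(A_1,A_2,A_3)$-clean (no connected vertex outside $A$ lacking a neighbour in $A$) to conclude each component is a single vertex. Your proposal never uses cleanness, which is exactly the hypothesis that closes this case; a direct $P_7$-hunt from an edge of $P_L$ back to $A$ will not succeed without it.
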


\begin{proof} Since a normal tripod is maximal and not reducible, it follows 
that
\begin{itemize}
\item $X_\ell$ is non-empty for $\ell=1,2,3$.

\item $X_i\cap X_j=\emptyset$ for all distinct $i,j\in\{1,2,3\}$.

\item $X_1\cup X_2\cup X_3=X$.

\end{itemize}
Let $\ell \in \{1,2,3\}$. Let $\mathcal{S}_\ell$ be the set of all quadruples 
$S=(P,Q_1,Q_2,Q_3)$ such that

\begin{itemize}
\item $P=\{p\}$ and $p \in X_\ell$.
\item For $i,j \in \{1,2,3\}$, if $Q_i \neq \emptyset$ and $j<i$, then $Q_j \neq \emptyset$.
\item either $Q_1=\emptyset$, or $Q_1=\{q_1\}$, $q_1  \in Y$ and $p$ is 
adjacent to $q_1$.
\item either $Q_2=\emptyset$, or $Q_2=\{q_2\}$, $q_2 \in Y \cup Z$ and
$q_2$ is adjacent to $q_1$ and not to $p$.
\item either $Q_3=\emptyset$, or $Q_3=\{q_3\}$, $q_3 \in Y$, and $q_3$ is
adjacent to $p$ and anticomplete to $\{q_1,q_2\}$.
\end{itemize}
Let $E(S)=P \cup Q_1 \cup Q_2 \cup Q_3$. We write $P(S)=P$, and $Q_i(S)=Q_i$
for $i=1,2,3$.
Let $\mathcal{S}=\{(S_1,S_2,S_3)$ such that $ S_\ell \in \mathcal{S}_\ell\}$.
Then $|\mathcal{S}|=O(|V(G)|^{12})$.

Let us say that $y \in Y$ is an \textit{$i$-cap} if there exist $x \in X_i$
and $y' \in (Y \cup Z) \setminus \{y\}$ such that $x-y-y'$ is a  path. 
Initialize the palette $L$:

$$L(v) = \begin{cases}
 \{1\} & \text{,\hspace{3ex} if $v\in A_1$} \\
 \{2\} & \text{,\hspace{3ex} if $v\in A_2$} \\
 \{3\} & \text{,\hspace{3ex} if $v\in A_3$} \\
 \{2,3\} & \text{,\hspace{3ex} if $v\in X_1$} \\
 \{1,3\} & \text{,\hspace{3ex} if $v\in X_2$} \\
 \{1,2\} & \text{,\hspace{3ex} if $v\in X_3$} \\
 \{1,2,3\} & \text{,\hspace{3ex} otherwise} \\
\end{cases}$$

\noindent Clearly, by renaming the colors, $G$ has a $3$-coloring if and only if $(G,L)$ is colorable.  The sets $(S_1,S_2,S_3)$ are designed to ``guess'' 
information about certain types of colorings of $G$ (type I--IV colorings 
defined later). Next we ``trim'' the collection $\mathcal{S}$, with the goal to 
only keep the sets that record legal colorings of each type.

For every $S=(S_1,S_2,S_3)$, proceed as follows.
If $Q_3(S_i)=\emptyset$ and $Q_2(S_i) \neq \emptyset$, let $M(S_i)$ be the set 
of vertices of  $Y$ that are  complete to  $P(S_i)$ and anticomplete to 
$Q_1(S_i) \cup Q_2(S_i)$, otherwise let $M(S_i)=\emptyset$.
If $Q_2(S_i)=Q_3(S_i)=\emptyset$, let $H(S_i)$ be the set of all $i$-caps,
and otherwise let $H(S_i)=\emptyset$. 
If  for some  $i \neq j \in \{1,2,3\}$
$Q_1(S_i)=Q_1(S_j)=\emptyset$ and there is $y \in Y$ with both a neighbor in 
$X_i$ and $X_j$, discard $S$.

Next suppose that  for some $i \in \{1,2,3\}$, $Q_3(S_i)=\emptyset$, 
and $Q_1(S_i), Q_2(S_i)  \neq \emptyset$. If
there exist $x \in X_i$ and $y_1,y_2 \in Y$ such that 
\begin{itemize}
\item $x$ is adjacent to $y_1$ and not to $y_2$
\item $y_1$ is adjacent to $y_2$,
\item $M(S_i) \cup Q_2(S_i)$ is anticomplete to $\{y_1,y_2\}$, and 
\item $x$ is complete to $M(S_i) \cup Q_2(S_i)$
\end{itemize}
then discard $S$.

Otherwise, let  
$E(S)=\bigcup_{i \in \{1,2,3\}} (E(S_i) \cup M(S_i) \cup H(S_i))$, and 
let $c$ be a coloring of  $G[E(S)]$ such that $c(v)=i$ for every
$v \in M(S_i) \cup H(S_i)$.  If for some $i \in \{1,2,3\}$,
$Q_1(S_i) \neq \emptyset$ and the vertex of $Q_1(S_i)$ is colored $i$,
discard $c$.  If for some $i$, $Q_3(S_i) \neq \emptyset$ and
the vertex of $Q_3(S_i)$ is colored $i$, discard $c$.

Otherwise, define the subpalette $L^S_c$  of $L$ as 
follows:

$${L}^S_c(v) = \begin{cases}
 c(v) & \text{, \hspace{2ex}if $v \in E(S)$} \\
1 & \text{, \hspace{2ex} if $Q_1(S_1)=\emptyset$, and $v \in Y$ and $v$ has a 
neighbor in $X_1$}\\
2 & \text{, \hspace{2ex} if $Q_1(S_2)=\emptyset$, and $v \in Y$ and $v$ has a 
neighbor in $X_2$}\\
 3 & \text{, \hspace{2ex} if $Q_1(S_3)=\emptyset$, and $v \in Y$ and $v$ has a 
neighbor in $X_3$}\\
 L(v) & \text{, \hspace{2ex}otherwise} \\
 \end{cases} $$

Fix $S=(S_1,S_2,S_3)$. Let $\ell \in \{1,2,3\}$ and
let $X_\ell'^S$ be the set of vertices $x \in X_\ell$ with a neighbor $w$ in
$E(S_\ell)$ such that $c(w) \neq \ell$, and let 

$$X'_S=X_1'^S \cup X_2'^S \cup X_3'^S \cup (E(S) \cap X)).$$

Let $Y_0^S$ be the set of vertices of $Y$ that are complete to 
$P(S_1) \cup P(S_2) \cup P(S_3)$.

Let $Y'_S$ be the set of vertices of $(Y \cup Z) \setminus (Y^s_0 \cup E(S))$ 
with a neighbor in  $Y_0^S \cup X'_S$.
We now carry out three rounds of  updating: first,
for every $\ell \in \{1,2,3\}$,  update $X_\ell'$ with respect to 
$E(S_\ell)$, then update $Y'_S$ with respect to 
$Y_0^S \cup X'_S$ and finally update $Y \setminus Y'_S$ with respect to  
$Y_0^S \cup X'_S \cup E(S)$. This takes time   $O(|V(G)|^2)$.
Let $\mathcal{L}$ be the set of all the palettes
$L_c^S$ thus generated. Then $|\mathcal{L}|=O(|V(G)|^{12})$, and
$\mathcal{L}$ can be constructed in time $O(|V(G)|^{15})$.
Clearly, \textit{(a)} holds.

We now define four different types of colorings of $G$ that are
needed to prove  \textit{(b)}.
Let $c$ be a coloring of $G$ and let $\ell \in \{1,2,3\}$. 
We say that  \textit{$c$ is a 
type I coloring with respect to $\ell$} if there exist vertices
$(p,q_1,q_2)$, where the following hold:

\begin{itemize}
\item $p \in X_\ell$
\item $q_1,q_2 \in Y \cup Z$ such that $p$ is adjacent to $q_1$ and not to $q_2$, 
and $q_1$ is adjacent to $q_2$
\item $c(q_1) \neq \ell$, and $c(q_2) \neq \ell$.
\end{itemize}

We say that   \textit{$c$ is a 
type II coloring with respect to $\ell$} if $c$ is not a 
type I coloring with respect to $\ell$ and there exist vertices
$(p,q_1,q_2,q_3)$, where the following hold:

\begin{itemize}
\item $p \in X_\ell$
\item $q_1,q_2 \in Y \cup Z$ such that $p$ is adjacent to $q_1$ and not to 
$q_2$,  and $q_1$ is adjacent to $q_2$
\item $q_3 \in Y$, and $q_3$ is adjacent to $p$ and anticomplete to 
$\{q_1,q_2\}$. 
\item $c(q_1) \neq \ell$, $c(q_2)=\ell$ and $c(q_3) \neq \ell$.
\end{itemize}

We say that \textit{$c$ is a 
type III coloring with respect to $\ell$} if $c$ is not a 
type I or type II coloring with respect to $\ell$ and there exist vertices
$(p,q_1,q_2)$, where the following hold:

\begin{itemize}
\item $p \in X_\ell$
\item $q_1,q_2\in Y \cup Z$ such that $p$ is adjacent to $q_1$ and not to
$q_2$, and $q_1$ is adjacent to $q_2$. 
\item $c(q_1) \neq \ell$, and $c(q_2)=\ell$.
\end{itemize}

We say that \textit{$c$ is a 
type IV coloring with respect to $\ell$} if $c$ is not a type I, type II, or type III coloring with respect to $\ell$ and there exist vertices
$(p,q_1)$, where the following hold:

\begin{itemize}
\item $p \in X_\ell$
\item $q_1 \in Y$ such that $p$ is adjacent to $q_1$.
\item $q_1$ is not an $\ell$-cap.
\item $c(q_1) \neq \ell$.
\item if $y$ is an $\ell$-cap, then $c(y)=\ell$.
\end{itemize}

We claim that if $c$ is a coloring of $G$ that is not of type I,II,III or IV
for some $i \in \{1,2,3\}$, then $c(y)=i$ for every $y \in Y$ with a 
neighbor in $X_i$. For suppose $c(y) \neq i$ for some $y \in Y$ with a 
neighbor $x \in X_i$. If $y$ can be chosen to be an $i$-cap, then $c$ is a type
I,II or III coloring, and otherwise $c$ is a type IV coloring. This proves
the claim.

Next we prove \textit{(b)}.
Clearly if $c$ is a coloring of $(G,L)$ for some $L \in \mathcal{L}$,
then $c$ is a coloring of $G$. We show that if $G$ is colorable, then 
$(G,L)$ is colorable  for some $L \in \mathcal{L}$. 
 
Let $c$ be a coloring of $G$.  Suppose first that  $c$ is a type I,II, III or 
IV coloring with  respect to $1$. Then there exist $p$ and possibly 
$q_1,q_2,q_3$ as in the 
definition of a type I,II, III or IV coloring.  If $c$ is a type III coloring,
let $M_1$ be the set of all vertices in $Y$ that are adjacent to $p$ and 
anticomplete to $\{q_1,q_2\}$. If $c$ is a type IV coloring, let $H_1$ be the 
set of all $1$-caps. Moreover, if $c$ is a type III coloring,  we may assume 
that  $p,q_1,q_2$ are chosen in such a way that $M_1$ is maximal, and so there do
not  exist  $x \in X_i$ and $y_1,y_2  \in Y \cup Z$ such that 
\begin{itemize}
\item $x$ is adjacent to $y_1$ and not to $y_2$
\item $y_1$ is adjacent to $y_2$
\item $M_1 \cup \{q_2\}$ is anticomplete to $\{y_1,y_2\}$
\item $x$ is complete to $M_1 \cup \{q_2\}$.
\end{itemize}
Also, if $c$ is a type IV coloring of $G$, then $c(y)=1$ for every $y \in H_1$.
Let $S_1=(P,Q_1,Q_2,Q_3)$ such that $P=\{p\}$, and for $i \in \{1,2,3\}$ 
either $Q_i=\{q_i\}$ or $Q_i = \emptyset$ if $q_i$ is not defined. 

If $c$ is  not a type I, II, III  or IV coloring with respect to $1$, choose 
$p \in X_1$ and set  $S_1=(\{p\}, \emptyset, \emptyset,\emptyset)$. 

Define $S_2,M_2,H_2$ and $S_3,M_3,H_3$ similarly, and let $S=(S_1,S_2,S_3)$. 
Recall that $E(S)=\bigcup_{i \in  \{1,2,3\}}(E(S_i) \cup M_i \cup H_i)$.
Now let $d$ be the restriction of $c$ to $G[E(S)]$.
It is easy to see that $c(v) \in L^S_d(v)$ for every $v \in V(G)$, and so
$c$ is a coloring of $(G,L_d^S)$. Thus \textit{(b)} holds.

Fix $S \in \mathcal{S}$, $c$ a coloring of $E(S)$ as described at the start of 
the proof,  and $L^S_c \in \mathcal{L}$. For $i \in \{1,2,3\}$,
let $P(S_i)=\{p_i\}$. 
Let $X''_S=X \setminus X'_S$. 
Let $T_S=(Y \cup Z) \setminus (Y'_S \cup Y_0^S \cup E(S))$. 
Now  $|L^S_c(v)|=1$ for every $v \in X'_S \cup E(S)$. Since at least two colors 
appear in $P(S_1) \cup P(S_2) \cup P(S_3)$, it follows that
$|L^S_c(v)|=1$ for every $v \in Y^s_0$. Setting
$s_i=p_i$,  we observe that \textit{(g)} holds.
Consequently, since every vertex of $Y'_S$ has a neighbor in
$X'_S \cup Y_0^S$, it follows that  $|L^S_c(v)| \leq 2$ for every 
$Y'_S$ and  \textit{(e)} holds.
Next we show that $\textit{(f)}$ holds. Let $i \neq j \in \{1,2,3\}$, 
let  $v \in X'_S \cap X_j$ and suppose that $L^S_c(v)=\{i\}$. If $v \in X_j'$,
then  $L^S_c(v)$ was changed in the first round of updating, and the 
assertion of $\textit{(f)}$ holds. Thus we may assume that $v \in P(S_j)$,
and $Q_1(S_i)=\emptyset$. But then every $y \in Y$ with a neighbor in $X_j$ 
has $L_c^S(y)=\{j\}$, and again $\textit{(f)}$ holds.

Next we prove a few structural statements about $G$, that will allow us 
to prove \textit{(c)} and \textit{(d)}. 

\bigskip

\noindent \textit{(1) If $x \in X_i$ and $y_1,y_2,y_3 \in Y \cup Z$ are such that $x-y_1-y_2-y_3$ is a path, then every vertex of $X_j \cup X_k$ has a neighbor in $\{y_1,y_2,y_3\} $}.

\bigskip

\noindent Proof: Suppose not. By symmetry, we may assume there exists a vertex $v\in X_j$ anticomplete to $\{y_1,y_2,y_3\}$. Suppose first that $v$ is non-adjacent to $x$. Since by \ref{tricomps} 
$G[A_i \cup A_j]$ is connected, and since both $x$ and $v$ have neighbors
in $A_i \cup A_j$, it follows that there exists a path $P$ from $x$
to $v$ with interior in $A_i \cup A_j$.
It follows that $V(P)$ is anticomplete to $\{y_1,y_2,y_3\}$ and so $v-P-x-y_1-y_2-y_3$ contains a $P_7$, a contradiction. Thus $v$ is adjacent to $x$. Let  $ a\in N(v)\cap A_j $ and $  b\in N(a)\cap A_k $, then $b-a-v-x-y_1-y_2-y_3$ is a  $P_7$ in $G$, a contradiction. This proves \textit{(1)}.

\bigskip

\noindent \textit{(2) If $x \in X_i$, $z \in Y$, and $y_1,y_2\in Y''_S$ are such that $x-z-y_1-y_2$ is a path, then $z\in Y^S_0 $.}

\bigskip

\noindent Proof:  We may assume that $i=1$. By \textit{(1)}, each 
of $p_2,p_3$ has a neighbor in $\{y_1,y_2,z\}$. Since $y_1,y_2\in Y''_S $, it follows that $\{y_1,y_2\}$ is anticomplete to $\{p_2,p_3\}$. This implies that $z$ is complete to $\{p_2,p_3\}$, and so $v-z-y_1-y_2$ is a path for every $v \in \{p_2,p_3\}$. Now, by the same argument it follows that $z$ is adjacent to $p_1$. Hence, $z\in Y^s_0$. This proves \textit{(2)}.

\bigskip

Let $P_L$ be the set of vertices  $t \in T_S$ with $|L_c^S(t)|=3$. 
From the definition of $L_c^S$, it follows that if $v \in T_S \setminus P_L$, 
then  for some $i \in \{1,2,3\}$, $v$ has a neighbor in 
$E(S)=(E(S_i) \setminus X) \cup H(S_i) \cup M(S_i)$.

\bigskip

\noindent \textit{(3) No vertex of $X''_S$ is mixed on an edge of $P_L$.}

\bigskip

\noindent Proof: Suppose $x-y_1-y_2$ is a
path,  where $x \in X_1 \cap X''_S$, and $y_1,y_2 \in P_L$. Then 
$y_1$ is an $i$-cap and  $L_c^S(y_1) \neq \{1\}$.  It follows that 
$Q_1(S_1) \neq \emptyset$, $Q_2(S_1) \neq \emptyset$, and 
$c(Q_1(S_1)) \neq 1$. Write $p=p_1$. Let $Q_1(S_1)=\{q_1\}$.
Since $x \in X''_S \cap X_\ell$, it follows that $x$ is anticomplete
to $P(S_1) \cup Q_1(S_1)$. Since $y_1,y_2 \in P_L$, it follows that 
$\{y_1,y_2\}$ is anticomplete to $E(S_1)$.

Let $Q_2(S_1)=\{q_2\}$.
Suppose first that $x$ is non-adjacent to $q_2$. Let $P$ be a path 
from $x$ to $p$ with interior in $A_1 \cup A_2$ (such a path exists 
by \ref{tricomps}).  Now $y_2-y_1-x-P-p-q_1-q_2$ is a  path with at 
least  seven vertices, a contradiction.  This proves that $x$ is adjacent to 
$q_2$, and since $x \in X''_S$, we deduce that $c(q_2)=1$.

Next suppose $Q_3(S_1) \neq \emptyset$; let $Q_3(S_1)=\{q_3\}$. Then
$c(q_3) \neq 1$, and so $x$ is non-adjacent to $q_3$. Now
$y_2-y_1-x-q_2-q_1-p-q_3$  is a $P_7$, a contradiction. This proves that
 $Q_3(S_1)=\emptyset$. Recall  that when $Q_2(S_1) \neq \emptyset$ and
$Q_3(S_1) \neq \emptyset$, $M(S_1)$ is defined to be the  set of all vertices 
of  $Y$ that are adjacent to $p$ and anticomplete to $\{q_1,q_2\}$. Then
$L_c^S(v)=1$ for every $m \in M(S_1)$, and $\{y_1,y_2\}$ is anticomplete
to $M(S_1)$. Consequently,  since $y_2-y_1-x-q_2-q_1-p-m$ is not a $P_7$ for 
any $m \in M(S_1)$, we deduce that $x$ is complete to $M(S_1)$, and thus
the quadruple $S_1$ was discarded during the construction of $\mathcal{L}$,
a contradiction.
 This proves 
\textit{(3)}.

\bigskip

\noindent \textit{(4)  No vertex of $Y'_S$ is mixed on an edge of $P_L$.}

\bigskip

\noindent Proof: Suppose $y-y_1-y_2$ is a path, where $y \in Y'_S$,
and $y_1,y_2 \in P_L$. Then $y \not \in E(S) \cup Y_0^S$. If $y$
has a neighbor in $x \in X'_S$, then $x-y-y_1-y_2$ is 
 path, and so by \textit{(2)} $y \in Y_0^S$, a contradiction. 
This proves that $y$ is anticomplete to $X'_S$, and so
$y$ has a neighbor in $y_0 \in Y_0^S$. By the definition of $Y_0^S$,
$y_0$ is adjacent to $p_1$. Let $a_1 \in A_1$ be
adjacent to $p_1$, and let $a_2 \in A_2$ be adjacent to $a_1$.
Now $a_2-a_1-p_1-y_0-y-y_1-y_2$ is  $P_7$ in $G$, a contradiction. This proves
\textit{(4)}.

\bigskip

\noindent \textit{(5) If $T_S \setminus P_L$ is anticomplete to $P_L$.}

\bigskip

\noindent Proof: Suppose $t \in T_S \setminus P_L$ has a neighbor $p \in P_L$.
Then $t$ has a neighbor $w \in E(S) \setminus X$, and since $|L_c^S(p)|=3$, 
it follows that $w$ is non-adjacent to  $p$. Suppose first that
$w \in Q_1(S_i) \cup Q_3(S_i) \cup M(S_i)$ for some $i \in \{1,2,3\}$. Then $w$ 
has a neighbor $x \in X'$, and so
$x-w-t-p$ is a path. Now \textit{(2)} implies that $w \in Y_0$,
and therefore $t \in Y'$, contrary to the fact that $t \in T_S$.

Next suppose that $w \in Q_2(S_i)$ for some $i \in \{1,2,3\}$. We may assume
$i=1$.  Let  $Q_1(S_1)=q_1$. Let $a \in A_1$ be a neighbor of $p_1$,
and let $a' \in A_2$ be adjacent to $a$. Then $a'-a-p_1-q_1-w-t-p$ is a
$P_7$ in $G$, a contradiction. 

Consequently,  $w \in H(S_i)$ for some $i \in \{1,2,3\}$. In particular,
$H(S_i) \neq \emptyset$,  and so 
$L_c^S(h)=\{i\}$ for every $i$-cap. Let $x \in X_i$ be adjacent to $w$. 
If $x$ is anticomplete to 
$\{t,p\}$, then again by \textit{(2)} $w \in Y_0^S$, a contradiction.
So, since $t,p \not \in H(S_i)$  it follows that $x$ is complete
to $\{t,p\}$, and in particular $p \in Y$. Therefore 
$N(p) \cap X \neq \emptyset$. Moreover, the fact that 
$p \not \in H(S_i)$ implies that $N(p) \cap X$ is complete to $N(p) \setminus X$.
Since $t \in N(p) \setminus X$, it follows that $p$ is a connected vertex,  
contrary to the fact that $G$ is $(A_1,A_2,A_3)$-clean. This proves \textit{(5)}.

\bigskip

Now by \textit{(3)}, \textit{(4)} and \textit{(5)}, for  every connected component $C$ of 
$P_L$, $V(C)$  is a homogeneous
set. Since no vertex of $P_L$ is connected, by~\ref{hset} $|V(C)|=1$, $P_L$ is stable and \textit{(c)} holds. Finally, setting $s_i=p_i$ for $i \in \{1,2,3\}$, $X'=X'_S$, and $Y_0=Y'_S$, 
\textit{(5)} implies that  \textit{(d)} holds.
This completes the proof of~\ref{Reduce}. \end{proof}

\section{A Lemma}

This section contains  a lemma that captures the properties of the set $P_L$
from \ref{Reduce} that makes it possible to reduce the size of the  lists of 
the vertices in this set.

\begin{lemma} \label{lemma}

Let $L$ be an order 3 palette of a connected $P_7$-free graph $G$. Let $Z$ be a set of subsets of $V(G)$. Suppose there exists disjoint non-empty  subsets 
$S_1, S_2 ,S_3$ of $V(G)$ satisfying the following:

\begin{itemize}

\item $L(v)=\{1,2,3\}\setminus \{\ell\}$ for every $v\in S_\ell$ where $\ell\in\{1,2,3\}$.

\item Let $i,j\in\{1,2,3\}$, and let $u_i, v_i \in S_i$ and $u_j,v_j \in S_j$,
such that $\{u_i,v_i,u_j,v_j\}$ is a stable set. Then there exists a path
$P$ with ends $a,b \in \{u_i,v_j,u_j,v_j\}$ such that
\begin{enumerate}
\item $\{a,b\} \neq \{u_i,u_j\}$ and $\{a,b\} \neq \{v_i,v_j\}$,  
\item $|L(w)|=1$ for every  interior vertex $w$ of $P$, and 
\item $V(P) \setminus \{a,b\}$ is disjoint from and anticomplete to 
$\{u_i,v_j,u_j,v_j\} \setminus \{a,b\}$.
\end{enumerate}

\item For every distinct pair $i,j\in\{1,2,3\}$ and $u\in S_i$ there exist vertices $v$ and $w$, such that $u-v-w$ is a path where both $v$ and $w$ are anticomplete to $S_j$ with $|L(v)|=|L(w)|=1$. 
\end{itemize}

\noindent Given a vertex $x\in V(G)$, define $N_\ell(x)=N(x)\cap S_\ell$ for $\ell=1,2,3$.
Let $X \subset V(G)$ be such that $N(x) \subseteq S_1 \cup S_2 \cup S_3$
for every $x \in X$, and no vertex of $X$ is connected.

\noindent Then there exists a set   $\mathcal{P}$ of $O(|V(G)|^9)$
restrictions of $(G,L,Z)$ such that the following hold:
\bigskip

\noindent \textit{(a)} For every $(G',L',Z') \in \mathcal{P}$,
$|L'(v)| \leq 2$ for every $v \in X \cap V(G')$, and $|Z'|=O(|V(G)|+|Z|)$, and

\bigskip
  
\noindent \textit{(b)} $(G,L,Z)$ is colorable if and only if $\mathcal{P}$ 
is colorable.

\bigskip

\noindent \textit{Moreover, $\mathcal{L}$ can be constructed in time 
$O(|V(G)|^{10})$, and a $3$-coloring of a restriction in $ \mathcal{P} $ can be extended to a $3$-coloring of $G$ in $ O(|V(G)|^2) $.}

\noindent

\bigskip

\end{lemma}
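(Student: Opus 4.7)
The plan is to branch over polynomially many guesses of a small number of vertices together with their assigned colors, and in each branch use the three structural hypotheses to propagate enough information to reduce $L(x)$ to size at most $2$ for every $x \in X$. Each vertex of $S_\ell$ has only two available colors, so specifying a $3$-coloring on $S_1 \cup S_2 \cup S_3$ amounts to giving a $2$-coloring on each $S_\ell$. The second bullet (the crossed-path condition) couples the $2$-colorings of $S_i$ and $S_j$ tightly: on any stable quadruple with two vertices from each side, the interior of the crossed path is forced-colored, which translates into an equation relating the colors of the endpoints. I would first derive from this, combined with $P_7$-freeness, a structural statement to the effect that the possible $2$-colorings of $S_i \cup S_j$ are parametrized by a bounded number of seed choices, and that three vertices per pair suffice.

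For the branching, for each of the three pairs $(i,j)\in\{(1,2),(1,3),(2,3)\}$ guess three vertices in $S_i \cup S_j$ together with a consistent two-coloring; this produces $O(|V(G)|^9)$ branches. In each branch, build a restriction by fixing $L'$ at the guessed vertices and then propagating. The third bullet supplies, for every $u\in S_i$, a $P_3$ from $u$ into fixed-color vertices anticomplete to $S_j$, which lets us force colors into $S_i$ from the outside. Repeated applications of the second bullet then extend forced colors throughout $S_1\cup S_2\cup S_3$. When propagation establishes that two vertices must take the same color but does not pin down the common color, add that two-element set to $Z'$ as a monochromatic constraint. The propagation produces $O(|V(G)|)$ new monochromatic sets, so $|Z'|=O(|V(G)|+|Z|)$; each branch is built in $O(|V(G)|)$ amortized work, for total construction time $O(|V(G)|^{10})$.

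For the list reduction on $X$, fix a branch and $x \in X$ with $|L(x)|=3$. Since $N(x)\subseteq S_1\cup S_2\cup S_3$ and $x$ is not connected, $N(x)$ splits into at least two anticomplete parts, yielding a pair of non-adjacent neighbors in crossed position across two of the $S_i$'s; together with the witnesses supplied by the third bullet, this produces a stable quadruple to which the second bullet can be applied. The conclusion, combined with the guessed seeds, forces the color of at least one neighbor of $x$, removing that color from $L'(x)$ and giving $|L'(x)|\leq 2$. For property \textit{(b)} the forward direction is automatic; the reverse amounts to showing that every valid $3$-coloring of $G$ is captured by some branch, which follows from the structural lemma, since each $2$-coloring of $S_1\cup S_2\cup S_3$ is determined by its values on some nine-vertex seed modulo the monochromatic equivalences recorded in $Z'$. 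Extending a coloring of a restriction back to $G$ is essentially free, since the restrictions are palette/monochromatic-constraint refinements of $G$ itself.

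The main obstacle is the structural lemma: proving that three seeds per pair genuinely parametrize the $2$-colorings of each $S_i\cup S_j$ modulo the recorded monochromatic sets. This hinges on $P_7$-freeness, which rules out long alternating chains whose colorings would otherwise flip independently and require more seeds; the crossed-path condition together with the absence of induced $P_7$'s forces such chains to collapse into a small number of equivalence classes, each of which can be absorbed into $Z'$. Once the lemma is established, the resulting restrictions have $|L'(v)|\leq 2$ on $X\cap V(G')$ and $|Z'|=O(|V(G)|+|Z|)$, and are therefore amenable to \ref{checkSubsets} in the surrounding algorithm.
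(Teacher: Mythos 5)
Your proposal is a plan rather than a proof: the step you yourself flag as ``the main obstacle'' --- that the $2$-colorings of each $S_i\cup S_j$ are parametrized by three seed vertices per pair, modulo $O(|V(G)|)$ monochromatic classes absorbed into $Z'$ --- is exactly the content of the lemma, and you give no argument for it. It is also not what the paper proves, and it is far from clear that it is true: the second hypothesis only yields, for each stable quadruple with two vertices in $S_i$ and two in $S_j$, \emph{some} crossing path with forced interior, and turning that local coupling into a global nine-seed parametrization of all colorings of $S_1\cup S_2\cup S_3$ is a much stronger statement than anything the lemma needs. The conclusion only asks that the lists of vertices of $X$ be reduced to size two; the paper never attempts to control the coloring of $S_1\cup S_2\cup S_3$ itself.

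There is a second concrete gap in your list-reduction step. You claim that a non-adjacent pair of neighbors of $x$ lying in two different $S_i$'s, ``together with the witnesses supplied by the third bullet,'' produces a stable quadruple to which the second bullet applies. But the second bullet requires \emph{four} vertices of $S_i\cup S_j$, two in each, and the witnesses $v,w$ from the third bullet have singleton lists and lie outside $S_1\cup S_2\cup S_3$, so they cannot complete the quadruple. In the paper's argument the second pair comes from a \emph{second} vertex $y\in X_i$ together with its fixed non-adjacent witnesses $n_j^i(y),n_k^i(y)$; the resulting statement (their claim (1)) is a nonexistence result proved by exhibiting a $P_7$, and it is what makes a single guessed triple $(x,n_j,n_k)$ propagate to every other vertex of $X_i$. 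The paper then splits colorings into ``type I with respect to $i$'' (some $x\in X_i$ has neighbors in $S_j$ and $S_k$ both colored $i$) versus not; in the type I case it branches over the $O(|V(G)|^3)$ witnessing triples, and in the non-type-I case it proves that $N_j(x)$ and $N_k(x)$ are monochromatic for every $x$ with $N_i(x)=\emptyset$, deletes those vertices, and records the monochromatic sets in $Z'$ --- which is where the $O(|V(G)|)$ growth of $Z'$ actually comes from, and also why extending a coloring back to $G$ is not ``essentially free'' but requires re-inserting deleted vertices using the monochromaticity of their neighborhoods. Your proposal accounts for neither of these mechanisms.
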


\begin{proof}

Let $X'$ be the set of vertices $x \in X$ with $|L(x)|=3$. 
If $X'=\emptyset$, let $\mathcal{P}=\{(G,L,Z)\}$.

By updating, we may assume that  for every  $x \in X'$ and $y$ adjacent to 
$x$, $|L(y)|\geq 2$. If $N(x) \subseteq S_i$ for some $x \in X'$ and 
$i \in \{1,2,3\}$, then setting $L(x)=\{i\}$ does not change the colorability
of $(G,L,Z)$, so we may assume that for every $x \in X'$ at least two
of the sets $N_1(x), N_2(x), N_3(x)$ are non-empty. Let
$X_1$ to be the set of vertices $x \in X'$ for which $N_2(x)$ is not complete
to $N_3(x)$; for every $x \in X_1$ fix $n_2^1(x) \in N_2(x)$ and 
$n_3^1(x) \in N_3(x)$ such that $n_2^1(x)$ is non-adjacent to $n^1_3(x)$.
Define $X_2$ and $ n_1^2(x), n_3^2(x)$  for every $x \in X_2$, and
$X_3$ and $ n_1^3(x), n_2^3(x)$  for every $x \in X_3$ similarly.
Since no vertex of $X'$ is connected, it follows that
$X'=X_1 \cup X_2 \cup X_3$.

\bigskip

\noindent \textit{(1) Let $\{i,j,k\}= \{1,2,3\}$. There do not exist
$x,y \in X_i$,  $n_j \in N_j(x)$ and $n_k \in N_k(x)$ 
such that $n_j$ is non-adjacent to $n_k$, and $\{x,n_j,n_k\}$ is anticomplete
to $\{y, n_j^i(y)\}$, and $n_k^i(y)$ is anticomplete to $\{n_j,n_k\}$.}

\bigskip

\noindent Proof: Write $n_j(y)=n^i_j(y)$, and
$n_k(y)=n^i_k(y)$. By the third assumption of the theorem, there exist
$a, b \in V(G)$  such that $n_j(y)-a-b$ is a path where both $a$ and $b$ are anticomplete to $S_k$ with $|L(a)|=|L(b)|=1$. Since $x,y \in X'$, it follows that $\{a,b\}$ is anticomplete to $\{x,y\}$. 
If $x$ is adjacent to $n_k(y)$, then $n_k-x-n_k(y)-y-n_j(y)-a-b$
is a $P_7$ in $G$, a contradiction, so $x$ is non-adjacent to $n_k(y)$.
Now by the second assumption of the theorem there exists a path
$P$ with ends $a,b \in \{n_j,n_j(y),n_k,n_k(y)\}$, 
such that $\{a,b\} \neq \{n_j,n_k\}$,  $\{a,b\} \neq \{n_j(y),n_k(y)\}$,  
every interior vertex $w$  of $P$ has $|L(w)|=1$,  and 
$V(P) \setminus \{a,b\}$ is disjoint from and anticomplete to 
$\{n_j,n_j(y), n_k, n_k(y)\} \setminus \{a,b\}$.  Since $x,y \in X'$,
it follows that $V(P) \setminus \{n_j,n_j(y),n_k,n_k(y)\}$ is anticomplete
to $\{x,y\}$. But now $G[V(P) \cup \{x,y,n_j,n_j(y),n_k,n_k(y)\}]$ is
a path of length at least $7$, a contradiction. This proves \textit{(1)}.

\bigskip

Let $\{i,j,k\}= \{1,2,3\}$.  A coloring $c$ of a restriction $(G,L'',Z'')$ 
of $(G,L,Z)$ is a 
\textit{a type I coloring with respect to $i$}
if there exists $x \in X_i$, $n_j \in N_j(x)$
and $n_k \in N_k(x)$ such that  $c(n_j)=c(n_k)=i$.

\bigskip

\noindent \textit{(2) Let $(G,L'',Z'')$ be a restriction of $(G,L,Z)$.
If $(G,L'',Z'')$ admits a type I coloring with respect to $i$,
then there exists a set $\mathcal{L}_i$ of $O(|V(G)|^3)$ subpalettes of $L''$ 
such  that}

\bigskip

\noindent \textit{\textit{(a)} $|L'(v)| \leq 2$ for every $L'\in \mathcal{L}_i$ and $v \in X_i$, and }

\bigskip
  
\noindent \textit{\textit{(b)} $(G,L'',Z'')$ admits a type I coloring with respect  to
$i$ if and only if $(G,\mathcal{L}_i,Z'')$ is colorable.}

\bigskip

\noindent \textit{Moreover, $\mathcal{L}_i$ can be constructed in time $O(|V(G)|^4)$.}

\bigskip 

\noindent Proof: For every $x \in X_i$, $n_j \in N_j(x)$, $n_k \in N_k(x)$ such that
$n_j$ is non-adjacent to $n_k$,  and $c_1 \in \{j,k\}$ do the following.

Initialize the order $3$ palette $L_{x,n_j,n_k,c_1}$ of $G$:

\begin{itemize}

\item $L_{x,n_j,n_k,c_1}(x)=\{c_1\}$, 

\item $L_{x,n_j,n_k,c_1}(n_j)=L_{x,n_j,n_k,c_1}(n_k)=\{i\}$, and

\item $L_{x,n_j,n_k,c_1}(v)=L''(v)$ for all $v \in V(G) \setminus \{x\}$.

\end{itemize}

Assume that $c_1=j$; we perform a symmetric construction if $c_1=k$.
For every $y \in X_i \setminus \{x\}$ we modify $L_{x,n_j,n_k,c_1}$ as follows:
$$L_{x,n_j,n_k,c_1}(y) = \begin{cases}
 L_{x,n_j,n_k,c_1}(y)\setminus \{i\} & \text{, \hspace{2ex}if $y$ is adjacent to 
one of $n_j, n_k$, or $n_k^i(y)$ is adjacent to $x$}\\
 
 L_{x,n_j,n_k,c_1}(y)\setminus \{j\} & \text{, \hspace{2ex}if $y$ is adjacent to $x$,
or $n_k^i(y)$ is adjacent to one of $n_j, n_k$}\\

 L_{x,n_j,n_k,c_1}(v)\setminus \{k\} & \text{, \hspace{2ex}if $n_j^i(y)$ is adjacent to 
one of $n_j, n_k$}\\

\end{cases}$$

Now \textit{(1)} implies that  $|L_{x,n_j,n_k,c_1}(y)| \leq 2$ for every 
$y \in X_i$. Let  $\mathcal{L}_i$ be the set of all the $O(|V(G)|^3)$ palettes $L_{x,n_j,n_k,c_1}$ 
thus 
constructed. By construction, if $(G,\mathcal{L},Z'')$ is colorable then 
$(G,L'',Z'')$ has a type I coloring with respect to $i$. 

Now, suppose $c$ is a type I coloring of $(G,L'',Z'')$ with respect to $i$, 
and so for
some $x \in X_i$, there exist 
$n_j \in N_j(x)$ and $n_k \in N_k(x)$ with  $c(n_j)=c(n_k)=i$. 
Then $n_j$ is non-adjacent to $n_k$.
We may assume that $c(x)=j$. Then $c(x) \in L_{x,n_j,n_k,j}(x)$.
Consider a vertex $y\in X_i\setminus \{x\}$.  If 
$y$ is adjacent to  one of $n_j, n_k$, then $c(y) \neq i$.
If  $n_k^i(y)$ is adjacent to $x$, then, since $n_k^i(y) \in S_k$,
it follows that $c(n_k^i(y))=i$, and again $c(y) \neq i$.
If $y$ is adjacent to $x$, then $c(y) \neq j$. If $n_k^i(y)$ is adjacent to 
one of $n_j, n_k$, then, since $n_k^i(y) \in S_k$,
it follows that $c(n_k^i(y))=j$, and again $c(y) \neq j$.
Finally,   if $n_j^i(y)$ is adjacent to one of $n_j, n_k$,
then, since  $n_j^i(y) \in S_j$,
it follows that $c(n_j^i(y))=k$, and again $c(y) \neq k$.
Thus, in all cases, $c(y) \in L_{x,n_j,n_k,c_1}(y)$, and \textit{(2)} follows.
This proves \textit{(2)}.

\bigskip

\noindent \textit{(3) Let $(G,L'',Z'')$ be a restriction of $(G,L,Z)$.
If $(G,L'',Z'')$ does not admit a type I coloring with 
respect to either of $i,j$, then there exists a subpalette $M_{i,j}$ of $L''$
such that }

\bigskip

\noindent \textit{\textit{(a)} $|M_{i,j}(x)| \leq 2$ for every $x \in X_i \cap X_j$, and}

\bigskip
  
\noindent \textit{\textit{(b)} $(G,L'',Z'')$ is colorable if and only if $(G,M_{i,j},Z'')$ is 
colorable.}

\bigskip

\noindent \textit{Moreover, $M_{i,j}$ can be constructed in time $O(|V(G)|^2)$.}

\bigskip

\noindent Proof: For every  $x \in X_i \cap X_j$, set $M_{i,j}(x)=\{i,j\}$.
Clearly $|M_{i,j}(v)| \leq 2$ for every $x \in X_i \cap X_j$,  and 
if $(G,M_{i,j},Z'')$ is colorable, then $(G,L'',Z'')$ is colorable. Suppose that
$(G,L'',Z'')$ is colorable, and let $c$ be a coloring of $(G,L'',Z'')$. Suppose
that $c(x) \not \in M_{i,j}(x)$ for some $v \in V(G)$. Then 
$x \in X_i \cap X_j$, and $c(x)=k$.
Therefore $c(n_i^j(x))=j$ and $c(n_j^i(x))=i$. Since
$(G,L'',Z'')$ does not admit a type I coloring with respect to $i$, 
it follows that $c(n_k^i(x))=j$, but then $c$ is a type I coloring of $(G,L'',Z'')$ with respect to $j$, a contradiction. This proves \textit{(3)}.

\bigskip

\noindent \textit{(4) 
Let $(G'',L'',Z'')$ be a restriction of $(G,L,Z)$.
Suppose $(G'',L'',Z'')$ does not admit a type I coloring with
respect to $i$. Let $Y_i$ be the set of vertices $x \in X_i$ such that
$N_i(x) = \emptyset$. Let $Z_i=\bigcup_{x \in Y_i}\{N_j(x),N_k(x)\}$.
Then $(G'',L'',Z'')$ is colorable if and only if $(G''\setminus Y_i, L'', Z'' \cup Z_i)$
is colorable and a $3$-coloring of $(G''\setminus Y_i, L'', Z'' \cup Z_i)$ can be extended to a $3$-coloring of $(G'',L'',Z'')$ in time $O(|V(G)||Y_i|)$.}

\bigskip

\noindent Proof: It is enough to prove that for every coloring $c$ of 
$(G,L,Z)$ and every $x \in X_i$ such that $N_i(x)= \emptyset$, the sets 
$N_j(x)$ and
$N_k(x)$ are monochromatic with respect to $c$. Suppose not, we may assume
for some coloring $c$ there are vertices $u,v \in N_j(x)$ with $c(u)=i$
and $c(v)=k$. Since $c$ is not a type I coloring of $(G,L,Z)$,
it follows that $c(w)=j$ for every $w \in N_k(x)$. But then
$x$ has neighbors of all three colors, contrary to the fact that $c$ is a 
coloring. This proves \textit{(4)}.

\bigskip

We now construct $\mathcal{P}$ as follows. We break the construction into
four steps $\mathcal{P}_1,\mathcal{P}_2,\mathcal{P}_3$ and $\mathcal{P}_4$.

To construct $\mathcal{P}_1$,  apply \textit{(2)}
to $(G,L,Z)$ with $i=1$, to construct $\mathcal{L}_1$.
Now apply \textit{(2)}
to $(G,L',Z)$ for every $L' \in \mathcal{L}_1$ with $i=2$, to construct 
$\mathcal{L}_{12}$. Next apply \textit{(2)}
to $(G,L',Z)$ for every $L' \in \mathcal{L}_{12}$ with $i=3$,
to construct $\mathcal{L}_{123}$. Then $|\mathcal{L}_{123}|=O(|V(G)|^9)$;
by \textit{(2)} this takes time  $O(|V(G)|^{10})$.
Let $\mathcal{P}_1$ consist of all $(G,L',Z)$ with $L' \in \mathcal{L}_{123}$.

Next we construct $\mathcal{P}_2$.
Apply \textit{(4)} to $(G,L',Z)$ for every $L' \in \mathcal{L}_{12}$ with 
$i=3$; this creates a set $\mathcal{P}_2$ of $O(|V(G)|^6)$ triples 
$(G \setminus Y_3, L', Z\cup Z_3)$, and $|Z \cup Z_3|=|Z|+O(|V(G)|)$. This step can be 
performed in time $O(|(V(G)|^2)$ for every $L' \in \mathcal{L}_{12}$, and so 
takes time  $O(|(V(G)|^8)$ in total. 

Next we construct $\mathcal{P}_3$.
Apply \textit{(3)} to $(G,L',Z)$ for every $L' \in \mathcal{L}_1$ with 
$i=2$ and $j=3$; this generates a set $\mathcal{P}'_3$ of 
$O(|V(G)|^3)$ triples $(G, M',Z)$, and takes time $O(|(V(G)|^5)$.  
Now apply \textit{(4)} to every $(G,M',Z) \in \mathcal{P}'_3$ 
with $i=2$; this creates a set $\mathcal{P}_3''$ of $O(|V(G)|^3)$ triples 
$(G \setminus Y_2, M', Z\cup Z_2)$, and $|Z \cup Z_2|=|Z|+O(|V(G)|)$. This step can be 
performed in time $O(|(V(G)|^5)$. Now apply  \textit{(4)} to 
every $(G \setminus Y_2,M',Z \cup Z_2) \in \mathcal{P}_3''$ with 
$i=3$; this creates a set $\mathcal{P}_3$ of $O(|V(G)|^3)$ triples 
$(G \setminus (Y_2 \cup Y_3), M' , Z\cup Z_2 \cup Z_3)$, and 
$|Z \cup Z_2 \cup Z_3|=|Z|+O(|V(G)|)$. This step can be 
performed in time $O(|(V(G)|^5)$. 

Finally, apply \textit{(3}) to $(G,L,Z)$ with $i=1,j=2$ to obtain
$(G,M_{12},Z)$. Next apply \textit{(3}) to $(G,M_{12},Z)$ with
$i=2,j=3$ to obtain $(G,M_{12}',Z)$.  Next apply \textit{(3}) to $(G,M_{12}',Z)$ with $i=1,j=3$ to obtain $(G,M_4,Z)$. 
Now apply \textit{(4)} to with  $i=1, 2,3$ to construct 
$\mathcal{P}_4=\{(G \setminus (Y_1 \cup Y_2 \cup Y_3), M_4, Z \cup Z_1 \cup Z_2 \cup Z_3)\}$. This step takes time  $O(|(V(G)|^2)$. 

Let $\mathcal{P}'=\mathcal{P}_1 \cup \mathcal{P}_2 \cup \mathcal{P}_3 \cup \mathcal{P}_4$. Then $|\mathcal{P}'|=O(|V(G)|^9)$, and it can be constructed in time
$O(|V(G)|^{10})$. Finally, repeat the construction described above for every
permutation of the colors $\{1,2,3\}$ and let $\mathcal{P}$ be the union
of the $3!$  sets of restrictions thus generated. It is still true that
$|\mathcal{P}|=O(|V(G)|^9)$, and it can be constructed in time $O(|V(G)|^{10})$.
Moreover, by the construction process and \textit{(4)}, a $3$-coloring of a restriction in $ \mathcal{P} $ can be extended to a $3$-coloring of $G$ in time
$O(|V(G)|^2) $.

\bigskip

\noindent \textit{(5) $\mathcal{P}$ satisfies \textit{(a)}}.

\bigskip

\noindent Proof: It is enough to prove the result for $\mathcal{P}'$.
By \textit{(3)}, $|Z'|=|Z|+O(|V(G)|)$ for
every $(G',L',Z') \in \mathcal{P}$. It remains to show that
$|L'(x)| \leq 2$  for every   $(G',L',Z') \in \mathcal{P}$ and $x \in X$.

Since $X=X_1 \cup X_2 \cup X_3$, \textit{(2)} implies that
$|L'(x)| \leq 2$ for every $x \in X$ and $(G,L',Z) \in \mathcal{P}_1$.

We now check the members of $\mathcal{P}_2$. 
Also by \textit{(2)}, $|L'(x)| \leq 2$ for every $x \in X_1 \cup X_2$
and every $L' \in \mathcal{L}_{12}$. Since no vertex of $X$ is connected,
it follows that every $x \in X'$ with all three of $N_1(x),N_2(x),N_3(x)$
non-empty belongs to $X_i$ for at least two values of $i$, and so 
if $x \in X' \setminus (X_1 \cup X_2)$, then $x \in Y_3$. Since
$V(G') = V(G) \setminus Y_3$ for every 
$(G',L',Z') \in \mathcal{P}_2$, it follows that  $|L'(x)| \leq 2$ for every 
$x \in X \cap V(G')$ and $(G',L',Z') \in \mathcal{P}_2$.

Next we check the members of $\mathcal{P}_3$. By~(2),  $|L'(x)| \leq 2$ for 
every $x \in X_1$ and every $L' \in \mathcal{L}_1$. By \textit{(3)},
$|L'(x)| \leq 2$ for  every $x \in X_1 \cup (X_2 \cap X_3)$  and
every $(G,M',Z) \in \mathcal{P}_3'$.  Since no vertex of $X$ is connected,
it follows that every $x \in X'$ with all three of $N_1(x),N_2(x),N_3(x)$
non-empty belongs to $X_i$ for at least two values of $i$, and so 
if $x \in X' \setminus (X_1 \cup (X_2 \cap X_3))$, then $x \in Y_2 \cup Y_3$. 
Since $V(G') = V(G) \setminus (Y_2 \cup Y_3)$ for every 
$(G',M',Z') \in \mathcal{P}_3$, it follows that  $|L'(x)| \leq 2$ for every 
$x \in X \cap V(G')$ and $(G',M',Z') \in \mathcal{P}_3$.

Finally, we check $(G \setminus (Y_1 \cup Y_2 \cup Y_3), M_4, Z \cup Z_1 \cup Z_2 \cup Z_3)$. By \textit{(3)}, $|M_4(x)| \leq 2$ for every 
$x \in (X_1 \cap X_2) \cup (X_2 \cap X_3) \cup (X_1 \cap X_3)$.
In  particular $|M_4(x)| \leq 2$ for every $x \in X'$ with
all three of $N_1(x),N_2(x),N_3(x)$ non-empty, and so if
$x \not \in  (X_1 \cap X_2) \cup (X_2 \cap X_3) \cup (X_1 \cap X_3)$, then
$x \in Y_1 \cup Y_2 \cup Y_3$. This proves \textit{(5)}.

\bigskip

\noindent \textit{(6) $\mathcal{P}$ satisfies \textit{(b)}}.

\bigskip

\noindent Proof: Suppose first that $G$ admits a type I coloring with respect 
to each  of $1,2$ and $3$. Then by \textit{(2)},
some  $(G',L',Z') \in \mathcal{P}_1$ is colorable.

Next suppose that $G$ admits a type I coloring with respect 
to each each of $1,2$ and not with respect to $3$. 
By~\textit{(2)}, $(G,L',Z)$ is colorable for some  $L' \in \mathcal{L}_{12}$; 
now by \textit{(4)}  $(G \setminus Y_3,L',Z \cup Z_3) \in \mathcal{P}_2$ is 
colorable.

Next suppose that $G$ admits a type I coloring with respect to 1, but not with
respect to 2 or 3. By \textit{(2)}, $(G,L',Z)$ is colorable for some  
$L' \in \mathcal{L}_1$. By~\textit{(3)}, there is
$(G,M',Z) \in \mathcal{P}'_3$ that is colorable.
Now by \textit{(4)}  
$(G \setminus (Y_2 \cup Y_3) , M', Z\cup Z_3 \cup Z_3) \in \mathcal{P}_3$
is colorable.

Finally, suppose that $G$ does not admit a type I coloring with respect to any
of $1,2,3$. Now by \text{(3)} and \textit{(4)} 
$(G \setminus (Y_1 \cup Y_2 \cup Y_3), M, Z \cup Z_1 \cup Z_2 \cup Z_3) \in 
\mathcal{P}_4$ is colorable. Since we performed the same construction
for all permutation of colors $\{1,2,3\}$, this proves \textit{(6)}.

\bigskip 

Now \ref{lemma} follows from \textit{(5)} and \textit{(6)}. 
\end{proof}

\section{Coloring Expansion}

In this section, we show how to expand the set of palettes constructed in \ref{Reduce}, yielding an equivalent polynomial sized collection of sub-problems all of which can be checked by applying \ref{checkSubsets}.

\begin{theorem}\label{MainLemma}

Let $G$ be a connected $P_7$-free graph, and $A=(A_1,A_2,A_3)$ be a 
normal tripod in $G$, and assume that $G$ is $(A_1,A_2,A_3)$-clean. Partition 
$V(G)=A\cup X\cup Y\cup Z$ as in \ref{Reduce}. 
Let $\mathcal{L}$ be the set of palettes generated by \ref{Reduce} and consider a fixed palette $L\in \mathcal{L}$. Then there exists a set  $\mathcal{P}_L$ of $O(|V(G)|^{9})$ restrictions  of $(G,L,\emptyset)$ such that the following hold:

\bigskip

\noindent \textit{(a) For every $(G',L',S)\in \mathcal{P}_L$, $|L'(v)| \leq 2$ for every $v \in V(G')$  and $ |S|= O(|V(G)|) $ , and}

\bigskip
  
\noindent (b) $(G,L)$ is colorable if and only if $ \mathcal{P}_L$ is colorable.

\bigskip

\noindent \textit{Moreover, $\mathcal{P}_L$ can be constructed in time $O(|V(G)|^{10})$, and a $3$-coloring of a restriction in $ \mathcal{P}_L $ can be extended to a $3$-coloring of $G$ in $ O(|V(G)|^2) $.}

\end{theorem}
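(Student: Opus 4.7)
The plan is to invoke \ref{lemma} on the graph $G$ with the palette $L$ produced by \ref{Reduce}, choosing the target set $X$ of \ref{lemma} to contain every vertex of $V(G)$ whose list still has size three; the output set $\mathcal{P}_L$ of \ref{MainLemma} will then be the set $\mathcal{P}$ produced by \ref{lemma} with $Z=\emptyset$. The bound $|\mathcal{P}_L|=O(|V(G)|^9)$, the construction time $O(|V(G)|^{10})$, and the $O(|V(G)|^2)$ back-lifting of a coloring are inherited directly from \ref{lemma}. Because our $X$ is chosen to contain every vertex with $|L(v)|=3$, the conclusion $|L'(v)|\leq 2$ for $v \in X \cap V(G')$ promised by \ref{lemma} will upgrade to $|L'(v)|\leq 2$ for all $v \in V(G')$, which is exactly part (a) of \ref{MainLemma}; similarly $|S|=O(|V(G)|)$ matches $|Z'|=O(|V(G)|)$.

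To set up the hypotheses, I will define $S_\ell = \{v \in V(G) : L(v) = \{1,2,3\}\setminus\{\ell\}\}$, so $S_1,S_2,S_3$ are disjoint and together exhaust the size-two-list vertices, and take $X=P_L$. First I would perform one round of updating $P_L$ against singleton-list vertices; any $v \in P_L$ still with $|L(v)|=3$ then has all neighbors of list-size $\geq 2$, giving $N(v)\subseteq S_1\cup S_2 \cup S_3$ as required. (If some $S_\ell$ is empty the reduction is immediate, because the missing color may be forbidden globally.) The ``no vertex of $X$ is connected'' hypothesis is immediate from $(A_1,A_2,A_3)$-cleanness: $P_L \subseteq Y \cup Z$ is anticomplete to $A$, so a connected vertex of $P_L$ would violate cleanness.

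The single-vertex path hypothesis of \ref{lemma}---for each $u \in S_i$ and $j \neq i$, a path $u\mbox{-}v\mbox{-}w$ with $v,w$ anticomplete to $S_j$ and $|L(v)|=|L(w)|=1$---is extracted from properties (e), (f), and (g) of \ref{Reduce}: (g) supplies the required pair of forced-color neighbors when $u \in Y_0$, (f) combined with the bipartite connectivity of \ref{tricomps} handles $u \in X'' \cap X_j$ (one may extend the forced neighbor into $A$), and (e) handles $u \in Y'$ (the forced neighbor in $X'\cup Y_0$ and one further step via $A$ or $\{s_1,s_2,s_3\}$ provides $v,w$).

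The main obstacle is the stable-quadruple path hypothesis of \ref{lemma}: given a stable set $\{u_i,v_i,u_j,v_j\}$ with $u_i,v_i \in S_i$ and $u_j,v_j \in S_j$, produce a path $P$ whose ends form a pair in this set different from $\{u_i,u_j\}$ and $\{v_i,v_j\}$, whose interior vertices all have $|L|=1$, and whose interior is anticomplete to the two remaining vertices. I plan a case analysis on where $u_i,v_i,u_j,v_j$ fall in the partition of $V(G) \setminus A$ induced by \ref{Reduce} (in $X''$, $Y'$, $Y_0$, or unreached by the sets $X'\cup Y_0$). In each case the path is routed through the triangle-side structure---either through $A_i\cup A_j$ using \ref{tricomps}, through one of the $s_1,s_2,s_3$, or through $Y_0$---and the $P_7$-free hypothesis, together with the discarding steps built into the construction of $\mathcal{L}$ in \ref{Reduce} (in particular the ``$M(S_i)$ obstruction'' that discards certain quadruples $S$), rules out exactly the forbidden pairings $\{u_i,u_j\}$ and $\{v_i,v_j\}$ as the only options, since alternatives would extend the constructed path into an induced $P_7$. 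Once this case analysis is complete the three hypotheses of \ref{lemma} hold and \ref{MainLemma} follows.
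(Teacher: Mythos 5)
Your overall strategy is the paper's: after updating, let $P$ be the remaining triple-list vertices, check that the hypotheses of \ref{lemma} hold with $P$ in the role of $X$, and let $\mathcal{P}_L$ be the output of \ref{lemma} with $Z=\emptyset$. But there is a genuine gap in the setup. You define $S_\ell$ as \emph{all} vertices with list $\{1,2,3\}\setminus\{\ell\}$. The paper instead takes $S_\ell$ to be only those vertices with that list that have a neighbor in $P$ (which, by \ref{Reduce}, all lie in $(X\setminus X')\cup Y'$). Since \ref{lemma} only requires $N(x)\subseteq S_1\cup S_2\cup S_3$ for $x\in P$, the smaller sets suffice, and the point of choosing them is that hypotheses 2 and 3 of \ref{lemma} are universally quantified over $S_1,S_2,S_3$: every stable quadruple must admit the prescribed path, and every $u\in S_i$ must admit the two-edge path. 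With your broader definition these hypotheses must also be verified for two-list vertices of $(Y\cup Z)\setminus Y'$ and of $T_S$ --- vertices with no neighbor in $P$, for which properties (e), (f), (g) of \ref{Reduce} say nothing --- and your case analysis ("in $X''$, $Y'$, $Y_0$, or unreached by $X'\cup Y_0$") offers no argument for the "unreached" case. There is no apparent reason the path hypotheses hold there, so the reduction to \ref{lemma} does not go through as set up.

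Two further points. First, the stable-quadruple hypothesis is the technical heart of this proof, and your sketch of it misidentifies the mechanism: in the paper the forbidden pairings $\{u_i,u_j\}$, $\{v_i,v_j\}$ are not "ruled out by $P_7$-freeness and the discarding steps of \ref{Reduce}"; rather, a path between one of the \emph{permitted} pairs is constructed directly in each case, routed through $A$ via \ref{tricomps}, through $s_1,s_2,s_3$, or through $Y_0$, using (e) and (g) together with the fact that the palette has been updated to a fixed point. (The $M(S_i)$ discarding is used inside the proof of \ref{Reduce} to make $P_L$ stable, not here.) Second, a single round of updating only the lists of $P_L$ is not enough: several of the "anticomplete to $S_j$" claims (e.g., that a vertex with list $\{i,k\}$ has no neighbor with list $\{i\}$) require that the whole palette has been repeatedly updated until no list changes. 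Your parenthetical that an empty $S_\ell$ makes the reduction "immediate, because the missing color may be forbidden globally" is also unjustified as stated.
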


\begin{proof}
We use the notation of~\ref{Reduce}.
By~\ref{Reduce}, for every $x \in P_L$, 
$N(x) \subseteq (X\setminus X') \cup Y'$, and $|L(v)| \leq 2$ for every 
$v \in N(x)$. 

Let $\{i,j,k\}=\{1,2,3\}$. We remind the reader that by~\ref{Reduce} 

\begin{itemize}
\item \textit{(a)} If $v\in Y'$ with $L(v)=\{i,j\}$, then there exists $u\in N(v)\cap (X'\cup Y_0^s)$ with $L(u)=\{k\}$
\item  \textit {(b)} If $v\in X'\cap X_j$ with $L(v)=\{i\}$, then either there exists $u\in N(v)$ such that $L(u)=\{k\}$, or $L(y)=\{j\}$ for every
$y \in Y$ with a neighbor in $X_j$, and
\item  \textit{(c)} If $v\in Y^s_0$ with $L(v)=\{i\}$, then there exists $u,w\in N(v) \cap \{s_1,s_2,s_3\}$ such that $L(u)=\{k\}$ and $L(w)=\{j\}$. 
\end{itemize}

Next we repeatedly update $L$ until we perform a round of updating in which
no list is changed. This requires at most $|V(G)|$ rounds of updating, and so
takes time $O(|V(G)|^3)$. Now let $P$ be the set of vertices $v \in P_L$ 
with  $|L(v)|=3$. By updating, we may assume that for every $v \in P$
and for every neighbor $y$ of $v$, we have $|L(y)|=2$.
For $1 \leq i <j \leq 3$
and $k \in \{1,2,3\} \setminus \{i,j\}$,
let $S_k$ be the set of vertices 
$v \in  (X\setminus X') \cup Y'$ such that $v$ has a neighbor in 
$P$, and $L(v)=\{i,j\}$. Since we have updated, it follows that
every vertex $w$ with $L(w) \in \{\{i\},\{j\}\}$ is anticomplete to $S_k$.

It is now enough to check that $S_1,S_2,S_3,P$ satisfy the assumptions
of~\ref{lemma} (where $P$ plays the roles of $X$ from \ref{lemma}). Since 
every vertex of $P$ is anticomplete to 
$A_1 \cup A_2 \cup A_3$ it follows that no vertex of $P$ is connected.
By definition, the lists of $S_1,S_2,S_3$ satisfy the first 
condition.

Now we check the second condition.  Let $1 \leq i <j \leq 3$ and let 
$u_i,v_i \in S_i$ and $u_j,v_j \in S_j$ such that $\{u_i,v_i,u_j,v_j\}$
is a stable set. We may assume $i=1$ and $j=2$. Then
$u_1,v_1 \in X_1 \cup Y'$ and $u_2,v_2 \in X_2 \cup Y'$.
Suppose first that both $u_1,v_1 \in X_1$. By~\ref{tricomps}, there is a path 
$P$ from $u$ to  $v$ with interior in $A_1 \cup A_3$. Since 
$u_2,v_2 \in S_2$, it follows that the interior of $P$ is 
anticomplete to and disjoint from $\{u_2,v_2\}$, as required.

Next suppose that $u_1 \in X_1$.  Then $v_1 \in Y'$, and therefore 
$v_1$ is anticomplete to $A_1 \cup A_2 \cup A_3$.
Assume first that $v_2 \in X_2$. Then
$u_2 \in Y'$, and in particular, $u_2$ is anticomplete
to $A_1 \cup A_2 \cup A_3$. Let $P$ be a path from $u_1$ to $v_2$
with interior in $A_1 \cup A_2$ (which exists by~\ref{tricomps});
then $P$ has the required properties. Thus we may assume that 
$v_2\in Y'$. 
By \textit{(a)}, there exists $w \in X \cup Y_0$ such that $v_2$
is adjacent to $w$, and $L(w)=\{2\}$. Then $w$ is anticomplete to 
$\{u_1,v_1\}$.  We may also assume $w$ is  anticomplete to $\{u_2\} $ since other wise $u_2-w-v_2$ is the desired path. If $w \in X_1\cup X_3$, 
then by~\ref{tricomps} there is a path $P$ from $u_1$ to $w$ with interior in
$A_1 \cup A_3$, and $u_1-P-w-v_2$ is the desired path. 
So we may assume that $w \in Y_0$. Then $L(s_1) = \{3\}$, since $s_3$ is adjacent to $w$, $L(w)=\{2\}$ and $s_1\in X_1$. Hence 
 $s_1$ is anticomplete to $\{u_1,u_2,v_1,v_2\}$. By~\ref{tricomps},
there is a path $P$ from $s_3$ to $u_1$ with interior in $A_1 \cup A_3$.
But now $v_2-w-s_1-P-u_1$ is the required path.

Thus we may assume that $u_1,u_2,v_1,v_2 \in Y$.
Let $a\in N(u_1)\cap(X'\cup Y_0^s) $ and $b\in N(v_1)\cap (X'\cup Y_0^s)$ 
with $L(a)=L(b)=\{1\}$.  Such $a,b$ exist by \textit{(a)}. 
Then $\{a,b\}$ is anticomplete to 
$\{u_2,v_2\}$. If there is a path $P$ from 
$a$ to $b$ with (possibly empty) interior in $A_1 \cup A_2 \cup A_3$, then
$u_1-a-P-b-v_1$ is the desired path, so we may assume no such path $P$ exists.
It follows that $a \neq b$, $a$ is non-adjacent to $b$, and at least one of 
$a,b$ belongs to $Y_0$. We may assume that $a \in Y_0$.  Therefore
$L(s_2)=\{3\}$, and so $s_2$ is anticomplete to $\{u_2,v_2\}$.
If $b$ is adjacent to some $s_2$, then $u_1-a-s_2-b-v_1$
is the desired path, so we may assume not. It follows that $b \in X$.
By \ref{tricomps} there is a path from $s_2$ to $b$ with interior
in $A_1 \cup A_2 \cup A_3$, and now  $u_1-a-s_2-P-b_1$ is the desired path.
Thus the second condition holds.

Lastly, we verify that the third condition holds. Let $i,j \in \{1,2,3\}$
and let $k \in \{1,2,3\} \setminus \{i,j\}$.  Consider $u\in S_i$. 

We claim that $u$ has a neighbor $a$ with $L(a)=\{i\}$,
and $a$ has a neighbor $b$ with $L(b)=\{k\}$, and $u-a-b$ is a path.
Suppose first that $u \in X_i$. Then $u$
has a neighbor $a \in A_i$, and $a$ has a neighbor $b \in A_k$, as required.
Thus we may assume that $u \in Y'$. 
Since $L(u)=\{j,k\}$, by \textit{(a)}, there exists    
$a\in N(u)\cap (X'\cup Y_0^s)$ with list $\{i\}$.  
Since $a$ has list $\{i\}$, it follows that $a \in X_j \cup X_k \cup Y_0$.
By \textit{(b)}  and \textit{(c)}, and since every vertex of $X_k$ has a 
neighbor in $A_k$, it follows that $a$ has a neighbor $b$ with 
$L(b)=\{k\}$.  Since $L(u)=\{j,k\}$ and we have updated, it follows
that $b$ is non-adjacent to $u$, and $u-a-b$ is a path.
This proves the claim.

Since $L(v)=\{i,k\}$ for every $v \in S_j$, and since we have update,
it follows that $\{a,b\}$ is anticomplete to $S_j$ as required.
Thus the third condition holds. This proves~\ref{MainLemma}. 
\end{proof}

\section{Main Result}

In this section we prove the main result of this paper \ref{half2}, which we restate:

\begin{lemma}\label{half2'}There is an algorithm with the following specifications:

\bigskip

{\bf Input:} A $P_7$-free graph $G$ which contains a triangle.

\bigskip

{\bf Output:} A $3$-coloring of $G$, or a determination that none exists.

\bigskip

{\bf Running time:} $O(|V(G)|^{24})$.

\end{lemma}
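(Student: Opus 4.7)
My plan is to cascade the preceding lemmas into a single pipeline that funnels the $3$-colorability question into polynomially many applications of \ref{checkSubsets}. First I apply \ref{tripod} to $G$ in time $O(|V(G)|^3)$. If it returns a connected triangle-free graph $G'$ with $|V(G')| \leq |V(G)|$, I invoke \ref{half1} on $G'$ in time $O(|V(G)|^7)$ and lift the answer back. Otherwise \ref{tripod} returns a connected graph $G'$ equipped with a normal tripod $(A_1,A_2,A_3)$, where $G'$ is $P_7$-free with the same $3$-colorability as $G$. (A disconnected input is dealt with by looping over its components, since the property of containing a triangle is witnessed in at least one of them.)

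Next I apply \ref{cleaning} in time $O(|V(G)|^4)$, obtaining either a non-colorability certificate or a connected, $(A_1,A_2,A_3)$-clean, $P_7$-free graph $G''$ with $|V(G'')| \leq |V(G)|$ that is $3$-colorable iff $G'$ is. I then partition $V(G'') = A \cup X \cup Y \cup Z$ as in \ref{Reduce} and invoke \ref{Reduce}, producing a collection $\mathcal{L}$ of $O(|V(G)|^{12})$ order-$3$ palettes in time $O(|V(G)|^{15})$; by \ref{Reduce}(b), $G''$ is $3$-colorable iff $(G'',\mathcal{L})$ is colorable.

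For each $L \in \mathcal{L}$, I invoke \ref{MainLemma} (whose proof internally invokes \ref{lemma}) to produce a set $\mathcal{P}_L$ of $O(|V(G)|^9)$ restrictions $(G^\star, L^\star, S)$ with $|L^\star(v)| \leq 2$ for every $v \in V(G^\star)$ and $|S| = O(|V(G)|)$, in time $O(|V(G)|^{10})$ per palette; by \ref{MainLemma}(b), $(G'',L)$ is colorable iff $\mathcal{P}_L$ is colorable. Finally, for each restriction in $\bigcup_{L \in \mathcal{L}} \mathcal{P}_L$ I apply \ref{checkSubsets} in time $O(|S| \cdot |V(G)|^2) = O(|V(G)|^3)$. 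If any restriction is colorable, I successively lift the coloring back through \ref{MainLemma}, \ref{Reduce}, \ref{cleaning}, and \ref{tripod} (each step costing $O(|V(G)|^2)$) to obtain a $3$-coloring of $G$; otherwise I declare $G$ not $3$-colorable.

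The running time accounting is straightforward: Steps through \ref{Reduce} cost $O(|V(G)|^{15})$; the total cost of all \ref{MainLemma} invocations is $O(|V(G)|^{12}) \cdot O(|V(G)|^{10}) = O(|V(G)|^{22})$; and the bottleneck is the final checking, where $O(|V(G)|^{12}) \cdot O(|V(G)|^9) = O(|V(G)|^{21})$ restrictions are each processed in $O(|V(G)|^3)$, giving the claimed $O(|V(G)|^{24})$. The one thing that needs verification is that each palette produced by \ref{Reduce} really does meet the hypotheses demanded by \ref{MainLemma}: this is already arranged, since $G''$ is $(A_1,A_2,A_3)$-clean (so the set of vertices with list $\{1,2,3\}$ contains no connected vertex, by \ref{hset}) and parts (c)--(g) of \ref{Reduce} encode exactly the structural conditions that \ref{MainLemma}, via \ref{lemma}, requires.
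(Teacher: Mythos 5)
Your proposal is correct and follows essentially the same pipeline as the paper's own proof: \ref{tripod}, then \ref{cleaning}, then \ref{Reduce}, then \ref{MainLemma} for each palette, and finally \ref{checkSubsets} on each restriction, with the identical $O(|V(G)|^{12}) \cdot O(|V(G)|^{9}) \cdot O(|V(G)|^{3}) = O(|V(G)|^{24})$ accounting. The only cosmetic difference is that you cite \ref{half1} for the triangle-free branch where the paper cites the algorithm of \cite{Tfree} directly; these are the same thing.
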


\begin{proof}
We may also assume that $G$ is connected (otherwise we run
the following procedure for each connected component of  $G$). 
 By \ref{tripod}, at the expense of carrying out a time $O(|V(G)|^3)$ procedure we can determine that no $3$-coloring of $G$ exists (then we can stop), or obtain a connected graph $G'$ satisfies the following:
\begin{itemize}
\item $|V(G')| \leq |V(G)|$,
\item $G'$ is connected,
\item $G'$ is $3$-colorable if and only if $G$ is $3$-colorable,
\item Any 3-coloring of $G'$ can be extended to a 3-coloring of $G$ in time $O(|V(G)|^2)$, and
\item $G'$ is either triangle-free or contains a normal tripod $(A_1,A_2,A_3)$.
\end{itemize}
In the case that $G'$ is triangle-free we can use the algorithm in \cite{Tfree} to either determine that no $ 3 $-coloring of $G'$ exists or find a $3$-coloring of $G'$ in $ O(|V(G)|^7)$. Thus we can either determine that no $3$-coloring of $G$ exists or use the $3$-coloring of $G'$ to find a $3$-coloring of $G$ in time $O(|V(G)|^2)$. 

Thus we may assume that $G'$ contains a normal tripod $(A_1,A_2,A_3)$. 
By \ref{cleaning}, at the expense of carrying out a time $O(|V(G)|^4)$ procedure, we can either determine that $G$ is not $3$-colorable (and stop), or 
may assume that $G'$ is $(A_1,A_2,A_3)$-clean. 
By \ref{Reduce}, in time $ O(|V(G)|^{15}) $ we can produce a set $\mathcal{L}$ of $O(|V(G)|^{12})$  order $3$ palettes  of $G'$ such that $G'$ has a $3$-coloring if and only if $(G',\mathcal{L})$ is colorable. By \ref{MainLemma} for a fixed $L\in \mathcal{L}$, in time $ O(|V(G)|^{10}) $ we can construct a set of $O(|V(G)|^{9})$ restrictions $\mathcal{P}_L$ such that  

\begin{itemize}
\item For every $(G'',L',X)\in \mathcal{P}_L$, $|L'(v)| \leq 2$ for every $v \in V(G'')$  and $ |X|= O(|V(G)|) $,
\item $(G',L)$ is colorable if and only if $ \mathcal{P}_L$ is colorable , and
\item a $3$-coloring of a restriction in $ \mathcal{P}_L $ can be extended to a $3$-coloring of $G'$ in time $ O(|V(G)|^2) $.
\end{itemize}
For every restriction in $\mathcal{P}_L$, by \ref{checkSubsets}, in time $O(|V(G)^3|)$ we can either determine that it is not colorable, or find a coloring of it. Since $|\mathcal{L}|= O(|V(G)|^{12})$ and $|\mathcal{P}_L|=O(|V(G)|^{9})$, we need to run \ref{checkSubsets} $O(|V(G)|^{21})$ times. Hence in time $O(|V(G)|^{24})$, we can either determine that no $ 3 $-coloring of $G'$ exists, which means that no $ 3 $-coloring of $G$ exists, or find a $3$-coloring of $G'$, which can be extended to a $3$-coloring of $G$ in time $O(|V(G)|^{2})$.  This proves \ref{half2'}. 
\end{proof}

\section{Acknowledgment}
We are grateful to Alex Scott for telling us about this problem, and to Juraj 
Stacho for sharing his knowledge of the area with us.


\begin{thebibliography}{99}

\bibitem{PSAT} B. Aspvall, M. Plass, R. Tarjan, A Linear-Time Algorithm for Testing the Truth of Certain Quantified Boolean Formulas, Inf. Process. Lett. 8 (3) (1979), 121-123.

\bibitem{Tfree} F. Bonomo, M. Chudnovsky, P. Maceli, O. Schaudt, M. Stein  and M. Zhong, Three-coloring graphs without triangles or induced paths on seven vertices, submitted for publication.

\bibitem{PART1} M. Chudnovsky, P. Maceli and M. Zhong, Three-coloring graphs with no induced six-edge path I : the triangle-free case, manuscript.



\bibitem{2SAT} K. Edwards, The complexity of colouring problems on dense graphs, Theoret. Comput. Sci. 43 (1986), 337-343.

\bibitem{P5} C.T. Ho\`{a}ng, M. Kami\'nski, V.V. Lozin, J. Sawada and X. Shu, Deciding $k$-colorability of $P_5$-free graphs in polynomial time, Algorithmica 57 (2010), 74-81.

\bibitem{LINE} I. Holyer, The NP-completeness of edge coloring, SIAM J. Comput. 10 (1981), 718-720.

\bibitem{4P7} S. Huang, Improved Complexity Results on $k$-Coloring $P_t$-Free Graphs, Proc. MFCS 2013, LNCS, to appear.

\bibitem{CYCLES} M. Kami\'nski and V.V. Lozin, Coloring edges and vertices of graphs without short or long cycles. Contrib. Discrete Mah. 2 (2007), 61-66.

\bibitem{KARP} R. M. Karp, Reducibility Among Combinatorial Problems, Complexity of Computer Computations, New York: Plenum., 85-103.

\bibitem{HFREE} D. Kr\'al, J. Kratochv\'il, Z. Tuza, and G. Woeginger, Complexity of coloring graphs without forbidden induced subgraphs, Proceedings of WG 2001, LNCS 2204, 254-262.

\bibitem{LINE1} D. Leven and Z. Galil, NP-completeness of finding the chromatic index of regular graphs. J. Algorithm 4 (1983), 35-44.

\bibitem{P6} B. Randerath and I. Schiermeyer, $3$-Colorability $\in P$ for $P_6$-free graphs, Discrete Appl. Math. 136 (2004), 299-313.

\bibitem{problem} B. Randerath, I. Schiermeyer and M.Tewes, Three-colorability 
and forbidden subgraphs. II: polynomial algorithms, Discrete Mathematics 251 
(2002), 137–153.


\bibitem{2SAT+} J. Stacho, private communication.

\bibitem{STOCK}L. Stockmeyer, Planar $3$-colorability is polynomial complete, SIGACT News (1973), 19-25.


\end{thebibliography}
\end{document}